\tikzstyle{new style 0}=[fill=white, draw=black, shape=circle]
\tikzstyle{rectangle}=[fill=white, draw=black, shape=rectangle]
\tikzstyle{new edge style 0}=[-, thick]
\tikzstyle{new edge style 1}=[-, dotted, thick]
\tikzstyle{new edge style 2}=[->]
\newcommand{\qw}[1][-1]{\ar @{-} [0,#1]}
\newcommand{\gate}[1]{*{\xy *+<.6em>{#1};p\save+LU;+RU **\dir{-}\restore\save+RU;+RD **\dir{-}\restore\save+RD;+LD **\dir{-}\restore\POS+LD;+LU **\dir{-}\endxy} \qw}
\newcommand{\measureD}[1]{*{\xy*+=+<.5em>{\vphantom{\rule{0em}{.1em}#1}}*\cir{r_l};p\save*!R{#1} \restore\save+UC;+UC-<.5em,0em>*!R{\hphantom{#1}}+L **\dir{-} \restore\save+DC;+DC-<.5em,0em>*!R{\hphantom{#1}}+L **\dir{-} \restore\POS+UC-<.5em,0em>*!R{\hphantom{#1}}+L;+DC-<.5em,0em>*!R{\hphantom{#1}}+L **\dir{-} \endxy} \qw}
\newcommand{\multimeasureD}[2]{*+<1em,.9em>{\hphantom{#2}}\save[0,0].[#1,0];p\save !C *{#2},p+LU+<0em,0em>;+RU+<-.8em,0em> **\dir{-}\restore\save +LD;+LU **\dir{-}\restore\save +LD;+RD-<.8em,0em> **\dir{-} \restore\save +RD+<0em,.8em>;+RU-<0em,.8em> **\dir{-} \restore \POS !UR*!UR{\cir<.9em>{r_d}};!DR*!DR{\cir<.9em>{d_l}}\restore \qw}
\newcommand{\multigate}[2]{*+<1em,.9em>{\hphantom{#2}} \qw \POS[0,0].[#1,0];p !C *{#2},p \save+LU;+RU **\dir{-}\restore\save+RU;+RD **\dir{-}\restore\save+RD;+LD **\dir{-}\restore\save+LD;+LU **\dir{-}\restore}
\newcommand{\ghost}[1]{*+<1em,.9em>{\hphantom{#1}} \qw}
\newcommand{\ustick}[1]{*!D!<0em,-.5em>=<0em>{#1}}
\newcommand{\Qcircuit}[1][0em]{\xymatrix @*=<#1>}
\newcommand{\node}[2][]{{\begin{array}{c} \ _{#1}\  \\ {#2} \\ \ \end{array}}\drop\frm{o} }
\newcommand{\pureghost}[1]{*+<1em,.9em>{\hphantom{#1}}}
\newcommand{\multiprepareC}[2]{*+<1em,.9em>{\hphantom{#2}}\save[0,0].[#1,0];p\save !C
  *{#2},p+RU+<0em,0em>;+LU+<+.8em,0em> **\dir{-}\restore\save +RD;+RU **\dir{-}\restore\save
  +RD;+LD+<.8em,0em> **\dir{-} \restore\save +LD+<0em,.8em>;+LU-<0em,.8em> **\dir{-} \restore \POS
  !UL*!UL{\cir<.9em>{u_r}};!DL*!DL{\cir<.9em>{l_u}}\restore}
\newcommand{\prepareC}[1]{*{\xy*+=+<.5em>{\vphantom{#1\rule{0em}{.1em}}}*\cir{l^r};p\save*!L{#1} \restore\save+UC;+UC+<.5em,0em>*!L{\hphantom{#1}}+R **\dir{-} \restore\save+DC;+DC+<.5em,0em>*!L{\hphantom{#1}}+R **\dir{-} \restore\POS+UC+<.5em,0em>*!L{\hphantom{#1}}+R;+DC+<.5em,0em>*!L{\hphantom{#1}}+R **\dir{-} \endxy}}
\newcommand{\transf}[1]{\ensuremath{\mathscr{#1}}}
\newcommand{\tA}{\transf A}
\newcommand{\tB}{\transf B}
\newcommand{\tC}{\transf C}
\newcommand{\tD}{\transf D}
\newcommand{\tE}{\transf E}
\newcommand{\tF}{\transf F}
\newcommand{\tG}{\transf G}
\newcommand{\tR}{\transf R}
\newcommand{\tS}{\transf S}
\newcommand{\tT}{\transf T}
\newcommand{\tU}{\transf U}
\newcommand{\tV}{\transf V}
\newcommand{\tI}{\transf I}
\newcommand{\sys}[1]{\ensuremath{\mathrm{#1}}}
\newcommand{\rA}{\sys A}
\newcommand{\rB}{\sys B}
\newcommand{\rC}{\sys C}
\newcommand{\rD}{\sys D}
\newcommand{\rE}{\sys E}
\newcommand{\rF}{\sys F}
\newcommand{\rI}{\sys I}
\newcommand{\rX}{\sys X}
\newcommand{\st}{\mathsf{St}}
\newcommand{\purst}{\mathsf{PurSt}}
\newcommand{\eff}{\mathsf{Eff}}
\newcommand{\tr}{\mathsf{Tr}}
\newcommand{\R}{\mathbb{R}}
\newcommand{\N}{\mathbb{N}}
\newcommand{\bi}{\mathbf{i}}
\newtheorem{theorem}{Theorem}[section]
\newtheorem{definition}{Definition}[section]
\newtheorem{proposition}{Proposition}[section]
\newtheorem{lemma}{Lemma}[section]
\newtheorem{property}{Property}
\DeclarePairedDelimiter{\norma}{\lVert}{\rVert}
\DeclarePairedDelimiter{\normop}{\lVert}{\rVert_{\mathrm {op}}}
\DeclarePairedDelimiter{\pairing}{(}{)}
\DeclarePairedDelimiter{\rket}{|}{)}
\DeclarePairedDelimiter{\rbra}{(}{|}
\newcommand{\cP}{\mathcal{P}}
\newcommand{\cO}{\mathcal{O}}
\newcommand{\cR}{\mathcal{R}}
\newcommand{\sX}{\mathsf{X}}
\newcommand{\sY}{\mathsf{Y}}
\newcommand{\sL}{\mathsf{L}}
\newcommand{\is}{\mathsf{\bf{i_s}}}
\newcommand{\p}{\mathsf{\bf{p}}}
\newcommand{\q}{\mathsf{\bf{q}}}
\newcommand{\s}{\mathsf{\bf{s}}}
\newcommand{\sti}{\mathsf{\bf{i}}}
\newcommand{\ts}{\mathsf{\bf{t_{n}}}}
\newcommand{\tnbar}{\mathsf{\bf{\bar{t}_{\bar{n}}}}}
\newcommand{\js}{\mathsf{\bf{j_s}}}
\newcommand{\hf}{\mathsf{\bf{h}}}
\begin{document}

\title{	Which entropy for general physical theories?}

\author{Paolo Perinotti}

\email{paolo.perinotti@unipv.it}

\affiliation{QUIT group, Physics Dept., Pavia University, and INFN Sezione di Pavia, via Bassi 6, 27100 Pavia, Italy}

\author{Alessandro Tosini}

\email{alessandro.tosini@unipv.it}

\affiliation{QUIT group, Physics Dept., Pavia University, and INFN Sezione di Pavia, via Bassi 6, 27100 Pavia, Italy}

\author{Leonardo Vaglini}

\email{leonardo.vaglini01@universitadipavia.it}

\affiliation{QUIT group, Physics Dept., Pavia University, and INFN Sezione di Pavia, via Bassi 6, 27100 Pavia, Italy}

\begin{abstract}
We address the problem of quantifying the information content of a source for an arbitrary information theory, where 
the information content is defined in terms of the asymptotic achievable compression rate. The functions that solve 
this problem in classical and quantum theory are Shannon's and von Neumann's entropy, respectively. However, in a 
general information theory there are three different functions that extend the notion of entropy, and this opens the 
question as to whether any of them can universally play the role of the quantifier for the information content. Here 
we answer the question in the negative, by evaluating the information content as well as the various entropic 
functions in a toy theory called Bilocal Classical Theory.
\end{abstract}
\maketitle

\section{Introduction}

The experience of quantum information has suggested that viewing quantum theory as an information theory helps in making its  most counterintuitive consequences comprehensible. This perspective 
promotes the role of information to a fundamental one in the development of a physical 
theory~\cite{hardy1999disentangling,Spekkens:2007aa,hardy2001quantum,PhysRevA.84.012311,DAriano:2017aa}. For example, one might 
wonder to what extent the laws of physics depend on the properties of physical systems used as information carriers.
Such an approach to foundations of physics already proved successful both in motivating the use of quantum theory on 
operational grounds, and in suggesting how viable generalisations of quantum theory for a post-quantum physics should 
look 
like~\cite{PhysRevA.75.032304,Pawowski:2009aa,PhysRevA.81.062348,hayden2007black,muller2012black,Pastawski:2015aa}.

From the above perspective, it is very natural to introduce in a general theory of physical systems some notion of 
entropy at a very fundamental level. Indeed, entropic functions have been identified as quantifiers of information 
from a source since the pioneering work by Shannon that paved the way to classical information 
theory~\cite{shannon1949communication}. Along this line, von Neumann entropy~\cite{VonNeumann:1927ve} was studied in 
the realm of quantum information theory as a quantifier of {\em quantum information}~\cite{PhysRevA.51.2738}.

The importance of Shannon's and von Neumann's entropy is due to classical and quantum source coding theorems, proving
that these two functions represent the achievable compression rates for a classical and a quantum source, 
respectively~\cite{shannon1949communication,PhysRevA.51.2738}.

In order to generalise the quantification of information content to a wider scenario of possible information 
theories---i.e.~theories of systems and their processings~\cite{PhysRevA.75.032304,PhysRevA.81.062348}---various 
authors started from the alternate definitions of Shannon's and von Neumann's entropies that refer to state 
preparations, measurements and their use in a communication scenario, and, as such, they have a counterpart in a 
general information theory~\cite{Barnum:2010gy,Short:2010kt,KIMURA2010175}. 

The first problem that one has to face at this point is that the possible definitions are three, and in general they 
lead to three functions that are actually different in the general case~\cite{KIMURA2010175}. A second, even more 
compelling problem, is that it is unknown whether any of the three entropic functions mentioned above is an exact 
quantifier corresponding to the best achievable compression rate of a source. In the light of these questions, 
classical and quantum theory are very special and fortunate examples of theories where both problems have a known 
solution: both theories are indeed {\em monoentropic}---i.e.~ the three entropic functions 
coincide~\cite{Barnum:2010gy}---and the unique entropy exactly matches the information content of a source, 
defined as its best achievable compression rate.

In this scenario, one can then ask two crucial and independent questions: i) 
what are the features that make a theory monoentropic; ii) under what 
conditions is the information content quantified by at least one of the three 
possible entropies---or a regularized version of them.

%within the framework of \emph{operational probabilistic theories} (OPT for short) and \emph{general probabilistic theories} (GPT). In the last decades, these
%have been proven to be very fruitful contexts for investigating properties that seemed to be peculiar of quantum theory from a very general perspective \cite{PhysRevLett.99.240501,barnum2008teleportation,DAriano2020information,Perinotti2021causalinfluencein}. One of the question that is still open is how to define a meaningful notion of \emph{information content} of a source for a generic theory of information, 
%where systems, and their compositional rules, are not supposed to necessarily satisfy the laws of classical or quantum physics.

In order to tackle the second question, in Ref.~\cite{PhysRevA.105.052222} the present authors set a minimal framework 
where the question can be addressed, and they introduced a thorough definition of information content, proving a 
lower bound for such quantity in terms of one of the three entropies. The latter result was then used to prove that a 
mixed state cannot have null information content, while, on the other hand, under suitable conditions, a pure state 
has null information content. In turn, the latter result opened a further question: do there exist theories allowing for pure states that have non-null information content?

In the present paper, we study the above questions in the theory called Bilocal Classical Theory 
(BCT)~\cite{PhysRevA.102.052216}, and show that, while BCT is monoentropic, the information content of its states {\em 
does not coincide} with its entropy. Moreover, in BCT pure states actually have non-null information content. The 
latter feature can be understood considering that the independent preparation of two systems in pure 
states does not correspond to a pure state for the composite system. Such a preparation thus introduces some ignorance 
about the whole system, even if the outcomes of independent experiments on its components are fully predictable. The 
latter result can be extended to any theory of classical systems where the rule for composing systems is such that the 
composition of pure states is not necessarily pure. Therefore, thinking of pure states as representing complete 
knowledge of the physical system is inaccurate if the composition law is not purity-preserving.

\section{An overview of the framework}

A brief review of the framework follows. A standard reference is \cite{DAriano:2017aa}, but an extended treatment can be found in \cite{PhysRevA.81.062348}. The framework is also carefully reviewed in the introduction of \cite{PhysRevA.102.052216}
\subsection{Review of the framework}\label{subsec:framework}

The primitive notions of an operational theory are those of test, event and system. A \emph{test} $\{\tA_i\}_{i\in\sX}$
is given by a collection of \emph{events}, where $i$ labels the elements
 of the {\em outcome space} $\sX$. The \emph{systems} (or, more precisely, {\em system types})
allow for the connection between different tests, and are denoted by capital Roman letters $\rA,\rB,\dots$. Therefore, 
a test is completely determined by its input and output systems, and the events associated with the outcome space $\sX$.
In order to represent a test and its events $\{\tA_i\}_{i\in\sX}$ we use the usual diagrammatic notation
\begin{align*}
  \begin{aligned}
    \Qcircuit @C=1em @R=.7em @! R {&\ustick{\rA} \qw&\gate{\{\tA_i\}_{i\in\sX}}& \ustick{\rB}
      \qw&\qw}
  \end{aligned}
  \quad , \quad
  \begin{aligned}
    \Qcircuit @C=1em @R=.7em @! R {&\ustick{\rA}\qw&\gate{\tA_i}&\ustick{\rB} \qw &\qw}
  \end{aligned}\ ,\\
\end{align*}
and we will call $\rA$, $\rB$ the input and the output system of the test, respectively. In this case we will say that 
the test $\{\tA_i\}_{i\in\sX}$, as well as each of its events $\tA_i$, are of type $\rA\to\rB$.  
%The set of the events from the input system $\rA$ to
%the output system $\rB$ will be denoted by $\tr(\rA,\rB)$.
If $\{\tA_i\}_{i\in\sX}$ and $\{\tB_j\}_{j\in\sY}$ are two tests,
one can define their \emph{sequential composition} as the test $\{\tC_{i,j}\}_{(i,j)\in\sX\times\sY}$, with events 
$\tC_{i,j}$ that are diagrammaticaly represented by 
\begin{align*}
  \begin{aligned}
    \Qcircuit @C=1em @R=.7em @! R {&\ustick{\rA} \qw&\gate{\tC_{i,j}}& \ustick{\rC}
      \qw&\qw}
  \end{aligned}
  \quad&\coloneqq\quad
  \begin{aligned}
    \Qcircuit @C=1em @R=.7em @! R {&\ustick{\rA}\qw&\gate{\tA_i}&\ustick{\rB} \qw &\gate{\tB_{j}}&\ustick{\rC} \qw
      &\qw}
  \end{aligned}\ .
\end{align*}
Notice that this definition requires the output system of the events on the left to be necessarily the input system of 
the events on the right. In formula, the sequential composition will be denote as $\tC_{i,j}\coloneqq\tB_j\tA_i$. A \emph{singleton test} is a test whose outcome space set $\sX$ is a singleton, and the 
unique event contained in it is called \emph{deterministic}. For any system $\rA$ there exists a unique 
\emph{identity} test $\{\tI_{\rA}\}$ such that, for any event of type $\rA\to\rB$, one has $\tA\tI_{\rA}=\tI_{\rB}\tA=\tA$. Another operation that can be performed on tests for defining a new test is \emph{parallel 
composition}. Given two systems $\rA$ and $\rB$ we call $\rA\rB$ the composite system of $\rA$ and $\rB$. Then, if 
$\{\tA_i\}_{i\in\sX}$ and $\{\tC_j\}_{j\in\sY}$ are two tests of type $\rA\to\rB$ and $\rC\to\rD$, respectively, their 
parallel composition is the test $\{\tA_i\boxtimes\tC_j\}_{(i,j)\in\sX\times\sY}$. Diagrammatically
\begin{align*}
  \begin{aligned}
    \Qcircuit @C=1em @R=1em @! R {&\ustick{\rA} \qw&\multigate{1}{\tA_i\boxtimes\tC_j}& \ustick{\rB} \qw&\qw \\
							&\ustick{\rC} \qw&\ghost{\tA_i\boxtimes\tC_j}& \ustick{\rD} \qw&\qw }
  \end{aligned}
  \quad&\coloneqq\quad
  \begin{aligned}
    \Qcircuit @C=1em @R=1em @! R {&\ustick{\rA} \qw&\gate{\tA_i}&\ustick{\rB} \qw & \qw \\
							&\ustick{\rC} \qw&\gate{\tC_{j}}&\ustick{\rD} \qw&\qw}
  \end{aligned} \, .
\end{align*}
Both the parallel and sequential compositions must be associative,
and the parallel composition operation commutes with the sequential one, namely 
$(\tE_h\boxtimes \tF_k)(\tC_i\boxtimes \tD_j)=(\tE_h \tC_i)\boxtimes (\tF_k \tD_j)$.
The identity $\tI_{\rA\rB}$ of the composite system $\rA\rB$ is the parallel composition of the two identities $\tI_\rA\boxtimes\tI_\rB$
\begin{equation}\label{eq:CompId}
 \begin{aligned}
    \Qcircuit @C=1em @R=.7em @! R {&\qw & \ustick {\rA\rB}\qw &\qw &\qw}
  \end{aligned}
  \quad \coloneqq\quad
  \begin{aligned}
    \Qcircuit @C=1em @R=1.7em @! R {&\qw &\ustick{\rA}\qw &\qw &\qw \\
	&\qw &\ustick{\rB} \qw &\qw &\qw}
  \end{aligned}\quad.
\end{equation}
We will denote by $\rA^{\boxtimes k}$ the composition
of the same system $\rA$ with itself $k$ times.

There is a special kind of system, the \emph{trivial system} I, satisfying $\rA\rI=\rI\rA=\rA$ for every system $\rA$. 
Tests with $\rI$ as input system and $\rA$ as the output one are called \emph{preparation tests} of $\rA$, while tests 
with input system $\rA$ and $\rI$ as output are named \emph{observation tests} of $\rA$. The events of a preparation 
test $\{\rho_i\}_{i\in\sX}$ and of an observation test $\{a_j\}_{j\in\sY}$ are represented through the following 
diagrams
\begin{align*}
  \begin{aligned}
    \Qcircuit @C=1em @R=.7em @! R {\prepareC{\rho_i}& \ustick \rA
      \qw&\qw}
  \end{aligned}
  \quad&\coloneqq\quad
  \begin{aligned}
    \Qcircuit @C=1em @R=.7em @! R {&\ustick{\rI}\qw&\gate{\rho_i}&
      \ustick \rA \qw&\qw}
  \end{aligned}\quad,\\
  \\
    \begin{aligned}
    \Qcircuit @C=1em @R=.7em @! R {& \ustick \rA\qw &\measureD {a_j}}
  \end{aligned}
  \quad&\coloneqq\quad
  \begin{aligned}
    \Qcircuit @C=1em @R=.7em @! R {&\ustick{\rA}\qw&\gate{a_j}&
      \ustick \rI \qw&\qw}
  \end{aligned}\quad.
\end{align*}
In the following we will always use Greek letters to denote preparation tests and Latin letters for the observation 
ones. Preparation and observation events will also be denoted by using round brackets: $\rket{\rho}_\rA$ and 
$\rbra{a}_{\rA}$, respectively, and the system will be omitted whenever it is clear from the context.

A \emph{circuit} is a diagram representing an arbitrary test that is obtained by sequential and parallel composition of other tests. We say that a circuit is \emph{closed} when the input and output systems are both the trivial one, namely, when it starts with a preparation test 
and it ends with an observation test. For any pair of systems, we want their agents to be allowed to exchange their systems. This requirement is captured by the notion of \emph{braiding}, that is
a family of reversible transformations $\tS_{\rA\rB}$, defined for any pair of systems of the theory, and denoted as follows
\begin{align}
  \begin{aligned}
    \Qcircuit @C=1em @R=1em @! R {&\ustick{\rA} \qw&\multigate{1}{\tS_{\rA\rB}}& \ustick{\rB} \qw&\qw \\
							&\ustick{\rB} \qw&\ghost{\tS_{\rA\rB}}& \ustick{\rA} \qw&\qw }
  \end{aligned}
& \ = \ 
\tikzfig{braiding}\ , \\[10pt]
  \begin{aligned}
    \Qcircuit @C=1em @R=1em @! R {&\ustick{\rA} \qw&\multigate{1}{\tS^{-1}_{\rA\rB}}& \ustick{\rB} \qw&\qw \\
							&\ustick{\rB} \qw&\ghost{\tS_{\rA\rB}}& \ustick{\rA} \qw&\qw }
  \end{aligned}
& \ = \  
\tikzfig{braiding_inv}\ .
\end{align}
Generally, $\tS_{\rA\rB}$ and $\tS_{\rA\rB}^{-1}$ are different transformations. When they are equal, e.g. in the case of QT, where $\tS_{\rA\rB}$ is represented by the swap operator,
%\footnote{More precisely, in QT $\tS_{\rA\rB}$ is the quantum channel whose Kraus operator is given by the unitary operator 
%$S:\cH_{\rA}\otimes\cH_{\rB}\to\cH_{\rB}\otimes\cH_{\rA}$ such that $S\ket{\psi}\otimes\ket{\phi}=\ket{\phi}\otimes\ket{\psi}$ for all $\psi\in\cH_{\rA}$
%and $\phi\in\cH_{\rB}$. }
the theory is called symmetric. These transformations must obey a sliding property, which asserts that two agents can equivalently perform their transformations and then exchange their
output systems, or first exchange their input systems and then perform the transformations. In diagrams
\begin{equation}
\tikzfig{swap_LHS}\ = \ \tikzfig{swap_RHS}\ .
\end{equation}
Then,
an \emph{operational Theory} is defined by a collection of systems, that is closed under composition, and a collection of tests (including all the family of 
tests $\tS_{\rA\rB}$ for any $\rA$ and $\rB$) closed under a pair of associative operations of sequential and parallel composition.

An \emph{operational 
probabilistic theory} is an operational theory where the test corresponding to any closed
circuit (equivalently, any test from the trivial system to itself) is given by a joint probability distribution 
for its outcomes, conditioned by the tests that make the circuit. Moreover, compound tests from the trivial system to 
istelf are independent, namely, the joint probability distribution is assumed to be given by the product of the probability 
distributions of the composing tests. A simple example is given by a preparation test $\{\rho_i\}_{i\in\sX}$ 
sequentially followed by an observation test $\{a_j\}_{j\in\sY}$:
\begin{align*}
\{p(i,j|\{\rho_l\},\{a_m\})\}\coloneqq
\left\{
 \begin{aligned}
    \Qcircuit @C=1em @R=.7em @! R {\prepareC{\rho_i}& \ustick \rA
      \qw&\measureD{a_j}}
  \end{aligned}
\right\}  ,
\end{align*}
 with $\sum_{i,j}p(i,j)=1$. Thus, one has a joint probability distribution, which is conditioned by the chosen tests 
$\{\rho_i\}_{i\in\sX}$ and  $\{a_j\}_{j\in\sY}$. From now on we will simply omit this dependence. 
The probability associated with the closed circuit where a preparation $\rho_i$ is followed by an observation $a_j$ 
will also be denoted by a pairing, $p(i,j)=(a_j|\rho_i)$.

Given any system $\rA$ of an OPT, One can define an equivalence relation on the set of preparation events by declaring 
that $\rho\sim\sigma$ iff  $\pairing{a |\rho}= \pairing{a |\sigma}$ for any observation event $a$. The set of 
equivalence classes with respect to this relation is called the set of \emph{states} of system $\rA$ and it is denoted 
by $\st(\rA)$. Similarly, one can define the set of \emph{effects} as the set of equivalence classes of the 
observation events such that $\pairing{a|\rho}=\pairing{b|\rho}$ for any preparation event $\rho$, and this is 
denoted by $\eff(\rA)$. The sets of deterministic states and effects, obtained as the equivalence classes of deterministic preparation events and deterministic observation events respectively, will be denoted by $\st_1({\rA})$ and 
$\eff_1({\rA})$ respectively. By definition, effects are separating for states, namely $\rket{\rho}\neq\rket{\sigma}$, 
iff there exists an effect $\rbra{a}$ such that $(a|\rho)\neq(a|\sigma)$, and viceversa states are separating for 
effects.

Given the probabilistic structure, states can be seen as functionals on the set of effects and viceversa, and then one 
can consider linear combinations of them, thus defining two linear spaces, $\st(\rA)_{\R}$ and $\eff(\rA)_{\R}$, which 
are dual to each other---assuming that they are finite-dimensional. The \emph{size} $D_{\rA}$ of a given system $\rA$ 
is defined as the dimension of the linear space $\st_{\R}(\rA)$.
Also the set of events of generic type $\rA\to\rB$ can be endowed with an equivalence relation. Indeed, given $\tA$ and $\tB$ of the 
same type $\rA\to\rB$, we say that they are \emph{operationally equivalent} ($\tA\sim\tB$) if the following identity 
holds
\begin{align*}
\begin{aligned}
\Qcircuit @C=0.8em @R=1.0em @! R { \multiprepareC{1}{\Psi} &\ustick \rA \qw & \gate {\tA} &\ustick  \rB \qw &\multimeasureD{1}{A}  \\
\pureghost {\Psi} &\qw & \ustick \rC  \qw   &\qw  &\ghost{A} }
\end{aligned}
\ = \
\begin{aligned}
\Qcircuit @C=0.8em @R=1.0em @! R { \multiprepareC{1}{\Psi} &\ustick \rA \qw & \gate {\tB} &\ustick  \rB \qw &\multimeasureD{1}{A}  \\
\pureghost {\Psi} &\qw & \ustick \rC  \qw   &\qw  &\ghost{A} }
\end{aligned} \ ,
\end{align*}
for any preparation event $\Psi$, observation event $A$ and any ancillary system $C$. We then denote the set of all the equivalence classes with $\tr(\rA\to\rB)$, whose elements are simply called \emph{transformations}. The operations of sequential and parallel composition are then extended to the equivalence classes in the obvious way, by means of the representatives. For instance, if $\tA$ and $\tB$ are two transformation events of type $\rA\to\rB$ and $\rB\to\rC$ respectively, then the sequential composition between the equivalence classes is defined as $[\tA][\tB]\coloneqq[\tA\tB]$. From now on we will always refer to transformations omitting the square bracket in the notation. Via the sequential composition, a transformation of type $\rA\to\rB$ is associated with a map from $\st(\rA\rC)$ to $\st(\rB\rC)$ for any ancillary system $\rC$, which can be uniquely extended to linear maps between the vector spaces $\st_{\R}(\rA\rC)$ to $\st_{\R}(\rB\rC)$ (see \cite{DAriano:2017aa}). Therefore, a transformations is completely characterised by a family of linear maps, one for each ancillary system $\rC$.
%In other words, two events are operationally equivalent if they induce the same linear map for any ancillary system $\rC$.
As for states and effects, the set of deterministic transformations will be denoted by $\tr_1(\rA\to\rB)$.
If $\tU\in\tr(\rA\to\rB)$ and there exists $\tV\in\tr(\rB\to\rA)$ such that $\tV\tU=\tI_{\rA}$ and $\tU\tV=\tI_{\rB}$, we say that $\tU$ is \emph{reversible}.
Accordingly, two systems $\rA$ and $\rB$ are called {\em operationally equivalent} ($\rA\cong\rB$) if there exists a reversible 
transformation $\tU\in\tr_1(\rA\to\rB)$. A notion that will be useful in the following is that of {\em asymptotically 
equivalent systems}.

\begin{definition}[Asymptotic equivalence]\label{def:asympeq}
Two systems $\rA_1$ and $\rA_2$ are asymptotically equivalent if the following conditions are satisfied
\begin{enumerate}
\item\label{it:kappa1}
there exists a pair of integers $k_1,k_2<\infty$, $\tE\in\tr_1(\rA_1^{\boxtimes k_1}\to\rA_2^{\boxtimes k_2})$ and $\tD\in\tr_1(\rA_2^{\boxtimes k_2}\to\rA_1^{\boxtimes k_1})$ such that $\tD\tE=\tI_{\rA_1^{\boxtimes k_1}}$;
\item\label{it:hacca2}
there exists a pair of integers $h_1,h_2<\infty$, $\tG\in\tr_1(\rA_2^{\boxtimes h_2}\to\rA_1^{\boxtimes h_1})$ and $\tF\in\tr_1(\rA_1^{\boxtimes h_1}\to\rA_2^{\boxtimes h_2})$ such that $\tF\tG=\tI_{\rA_2^{\boxtimes h_2}}$;
\item\label{it:equiciccia} let $M_{2}^{\rm min}(k_1)$ be the smallest $k_2$ such that item~\ref{it:kappa1} is satisfied for a given $k_1$, and similarly for 
$M_1^{\rm min}(h_2)$ with reference to item~\ref{it:hacca2}. The following relations hold:
\begin{equation}
\lim_{k_1\rightarrow \infty}\frac{M_2^{\rm min}(k_1)}{k_1}=k, \quad \lim_{h_2\rightarrow \infty}\frac{M_1^{\rm min}(h_2)}{h_2}=k^{-1}.
\label{eq:digitizability2}
\end{equation}
\end{enumerate}
\end{definition}
Notice that item \ref{it:kappa1} in the above definition is equivalent to the following: for any $k_1<\infty$ there exists $k_2<\infty$ and a pair of maps $\tE\in\tr_1(\rA_1^{\boxtimes k_1}\to\rA_2^{\boxtimes k_2})$ and $\tD\in\tr_1(\rA_2^{\boxtimes k_2}\to\rA_1^{\boxtimes k_1})$ such that $\tD\tE=\tI_{\rA_1^{\boxtimes k_1}}$.
Indeed, the above statement implies item \ref{it:kappa1}, as we now prove. 

Assuming the validity of item 1 in definition 2.1, if $k=k_1$, the statement is trivially true. We then first prove that for every integer $k<k_1$ there exists another integer $k_2$ and a pair of maps $\tilde{\tE}$ and $\tilde{\tD}$ such that $\tilde{\tD}\tilde{\tE}=\tI_{\rA_1^{\boxtimes k}}$. Let $\tE\in\tr_1(\rA_1^{\boxtimes k_1}\to\rA_2^{\boxtimes k_2})$ and $\tD\in\tr_1(\rA_2^{\boxtimes k_2}\to\rA_1^{\boxtimes k_1})$ be the maps given by item 1, such that 
$\tD\tE=\tI_{\rA_1^{\boxtimes k_1}}$. Let $|\rho)\in\st(\rA_1^{\boxtimes (k_1-k)})$ and define the map $\tE_\rho$ as the sequential composition of the encoding map $\tE$, preceeded by the parallel composition of $|\rho)$---which is a map from the trivial system $\rI$ to $\rA_1^{\boxtimes (k_1-k)}$---with the identity $\tI_{\rA^{\boxtimes k}}$ on the remaining $k$ systems, i.e., 
\[
\tE_\rho\coloneqq\tE\circ(|\rho)\boxtimes \tI_{\rA_1^{\boxtimes k}})\in\tr_1(\rA_1^{\boxtimes k}\to\rA_2^{\boxtimes k_2}).
\] 
Similarly we do for the decoding: we take the decoding $\tD$ and a deterministic effect $(e|\in\eff_1(\rA^{\boxtimes (k_1-k)})$ (i.e. a map from $\rA_1^{\boxtimes (k_1-k)}$ to the trivial system) and we define the sequential composition of $\tD$ with $(e|\boxtimes \tI_{\rA_1^{\boxtimes k}}$, to obtain
\[
\tD_e\coloneqq((e|\boxtimes \tI_{\rA_1^{\boxtimes k}})\circ\tD\in\tr_1(\rA_2^{\boxtimes k_2}\to\rA_1^{\boxtimes k}). 
\]
It is now clear that $\tD_e\tE_\rho=\tI_{\rA_1^{\boxtimes k}}$, indeed, using the fact that parallel and sequential compositions commute
\[
\begin{aligned}
\tD_e\tE_\rho&=[((e|\boxtimes \tI_{\rA_1^{\boxtimes k}})\circ\tD]\circ [\tE\circ(|\rho)\boxtimes \tI_{\rA_1^{\boxtimes k}})] \\
			&=((e|\boxtimes \tI_{\rA_1^{\boxtimes k}}) (|\rho)\boxtimes \tI_{\rA_1^{\boxtimes k}}) \\
			&=(e|\rho)_{\rA_1^{\boxtimes (k_1-k)}}\tI_{\rA_1^{\boxtimes k}}=\tI_{\rA_1^{\boxtimes k}}
\end{aligned}
\]
where the last step follows from the fact that we have chosen a deterministic state $|\rho)$ and a deterministic effect $(e|$ to define the encoding $\tE_\rho$ and the decoding $\tD_e$ respectively, so that $(e|\rho)=1$. Now, for $k>k_1$, let $m$ be the integer such that $(m-1)k_1\leq k< mk_1$ and let $\tilde{k}\coloneqq mk_1-k\leq k_1$. We prove that there exists an integer $\tilde{k}_2$ and a pair $\tilde{\tE}\in\tr_1(\rA_1^{\boxtimes k}\to\rA_2^{\boxtimes k_2})$ and $\tilde{\tD}$ that perfectly encodes $\rA_1^{\boxtimes k_1}$ onto $\rA_2^{\boxtimes k_2}$, i.e., $\tilde{\tD}\tilde{\tE}=\tI_{\rA_1^{\boxtimes k_1}}$. For $\tilde{k}\leq k_1$ we have seen that there exists a pair $(\tilde{\tE},\tilde{\tD})$ such that $\tilde{\tD}\tilde{\tE}=\tI_{\rA_1^{\boxtimes (k_1-\tilde{k})}}$. Then consider the encoding map from $\rA_1^{\boxtimes k}$ to $\rA_2^{\boxtimes mk_2}$ defined as the composition of $\tilde{\tE}$ with $m-1$ copies of $\tE$, say $\tilde{\tE}\boxtimes\tE^{\boxtimes (m-1)k_1}$. Similarly for the decoding, take $\tilde{\tD}\boxtimes\tD^{\boxtimes(m-1)k_1}$. It is now clear that this pair gives a perfect encoding of $\rA_1^{\boxtimes k}$ onto $\rA_2^{mk_2}$. This ensures that the statement in item 3 of definition \ref{def:asympeq} definitely makes sense.

Now we set up some terminology and we introduce pure and mixed states, as well as the definition of state dilation.
\begin{definition}[Refinement and convex refinement of an event]
Let $\tC\in\tr(\rA\to\rB)$.
\begin{enumerate}
\item  A {\em refinement} of $\tC$ is given by a 
	collection of events $\{\tB_{j}\}_{j\in\sY}\subseteq\tr(\rA\to\rB)$
	such that there exists a test $\{\tB_{i}\}_{i\in\sX}$ with $\sY\subseteq\sX$ and
	$\tC=\sum_{j\in\sY}\tB_{j}$.
	We say that a refinement $\{\tD_{i}\}_{i\in\sY}$ is trivial if $\tD_{i}=\lambda_{i}\tC$, $\lambda_{i}\in[0,1]$
	for every $i\in\sY$. Conversely, $\tC$ is called the coarse-graining of the events $\{\tD_{i}\}_{i\in\sY}$. We denote by $\cR(\tC)$ the collection of
all the refinements of $\tC$.
\item A {\em convex refinement} or {\em decomposition} of $\tC$ is given by a collection of pairs $\{(p_j,\tB_{j})\}_{j\in\sY}\subseteq\mathbb R\times\tr(\rA\to\rB)$ where $\{p_j\}_{j\in\sY}$ is a probability distribution and $\tB_j$ are events,
	such that there exist tests $\{\tB^{(j)}_{i}\}_{i\in\sX}$, with $\tB^{(j)}_{i_0}=\tB_j$ for all $j\in\sY$, for which the collection $\{p_j\tB^{(j)}_i\}_{(i,j)\in\sX\times\sY}$ is a legitimate conditional test, and $\tC=\sum_{j\in\sY}p_j\tB_{j}$. We say that a convex refinement is trivial if $\tB_{j}=\tC$ for any $j\in\sY$.
\end{enumerate}
\end{definition}

\begin{definition}[Atomic, refinable and extremal events]
An event $\tC$ is called {\em atomic} if it admits only trivial refinements, and {\em refinable} if it is not atomic. $\tC$ is called {\em extremal} if it admits
only trivial convex refinements.
\end{definition}
The notion of refinement and refinable events give rise to the definitions of \emph{pure} and \emph{mixed} states. 
\begin{definition}[Pure and mixed states]
$\rho\in\st(\rA)$ is called {\em pure} if it is extremal and deterministic. We will denote by $\purst(\rA)$ the set of all the pure 
states of system $\rA$.  On the other hand, $\rho\in\st(\rA)$ is called {\em mixed} if it is 
neither atomic nor extremal. 
\end{definition}
Given any state $\rho\in\st_1(\rA)$, any convex decomposition made of pure states will be accordingly called pure. The set of pure convex decompositions
of $\rho$ will be denoted by $\cP(\rho)=\{\{(p_i,\phi_i)\}_{i\in\sX}| \sum_ip_i\phi_i=\rho,\; \{\phi_i\}_{i\in\sX}\subseteq\purst(\rA),\;\sum_ip_i=1\}$.
\begin{definition}\label{def:dilation}
Let $\rho\in\st(\rA)$ and $\Psi\in\st(\rA\rB)$. We say that $\Psi$ is a {\em dilation} of $\rho$ if there exists a deterministic effect $e\in\eff_1(\rB)$ such that
\begin{equation*}
\begin{aligned}
 \Qcircuit @C=1em @R=.7em @! R { \prepareC{\rho} & \ustick \rA \qw & \qw}
\end{aligned}
\ = \
\begin{aligned}
 \Qcircuit @C=1em @R=.7em @! R { \multiprepareC{1}{\Psi} & \ustick \rA \qw& \qw \\
 \pureghost {\Psi} & \ustick \rB \qw&\measureD{e}}
\end{aligned}\ .
\end{equation*}
If $\Psi$ is also pure, then we say
that it is a {\em purification} of $\rho$ and $\rB$ is called the {\em purifying system}. We denote by $D_{\rho}$ the set of all dilations of the state $\rho$, and by $P_\rho$ the set of all its purifications. 
\end{definition}
Trivially one has that $P_{\rho}\subseteq D_{\rho}$. Moreover, if $\Omega\in D_{\rho}$, then one has $D_{\Omega}\subseteq D_{\rho}$

%In the following we will restrict attention to {\em weakly causal} theories, i.e.~theories where, for every system 
%type $\rA$, the deterministic effect $e_\rA$ is unique.

The linear space $\st_{\R}(\rA)$ can be endowed with a metric structure by means of the following norm,
 which has an operational
meaning related to optimal discrimination schemes~\cite{PhysRevA.81.062348}.
\begin{definition}[Operational norm]
The norm of an element $\rho\in\st(\rA)_{\R}$ is defined as
\[
\norma{\rho}_{\rm op}\coloneqq\sup_{\{a_0,a_1\}\subseteq\eff(\rA)}(a_0-a_1|\rho),
\]
where $\{a_0,a_1\}$ is any binary observation test.
\end{definition}
This norm reduces to the usual trace-norm in the case of quantum theory. Moreover, it satisfies a monotonicity property, as stated in the following lemma
(see \cite{PhysRevA.81.062348} for the proof).
\begin{lemma}[Monotonicity of the operational norm]\label{lem:opnorm_monotonicity}
For any $\delta\in\st_{\R}(\rA)$ and $\tC\in\tr(\rA,\rB)$ the following inequality holds
\begin{equation}
\normop{\tC\delta}\leq\normop{\delta},
\end{equation}
with equality holding iff $\tC$ is reversible.
\end{lemma}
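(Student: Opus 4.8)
The plan is to establish the inequality by pulling binary tests on $\rB$ back through $\tC$ and recognising the resulting effect differences as ones that are admissible in the supremum defining $\normop{\delta}$; the equality then splits into an easy direction (reversibility saturates the bound) and a delicate converse.

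\emph{Inequality.} Let $\{b_0,b_1\}$ be an arbitrary binary observation test on $\rB$, and fix a test $\{\tC_k\}_k$ with $\tC=\tC_{k_0}$. The sequential composition $\{b_j\tC_k\}_{(j,k)}$ is a legitimate observation test on $\rA$; coarse-graining its outcomes into the three classes $\{(0,k_0)\}$, $\{(1,k_0)\}$, and all the rest produces a ternary observation test $\{a_0,a_1,a_2\}$ on $\rA$ with
\begin{equation*}
a_0=b_0\tC,\qquad a_1=b_1\tC,\qquad a_2=\sum_{k\neq k_0}(b_0+b_1)\tC_k.
\end{equation*}
Since any grouping of the outcomes of a test is again a test, both $\{a_0+a_2,a_1\}$ and $\{a_0,a_1+a_2\}$ are binary observation tests on $\rA$. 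Using bilinearity of the pairing together with the identity $a_0-a_1=\tfrac12\big[(a_0+a_2)-a_1\big]+\tfrac12\big[a_0-(a_1+a_2)\big]$ I get
\begin{equation*}
(b_0-b_1|\tC\delta)=(a_0-a_1|\delta)=\tfrac12\big((a_0+a_2)-a_1\,\big|\,\delta\big)+\tfrac12\big(a_0-(a_1+a_2)\,\big|\,\delta\big)\leq\normop{\delta},
\end{equation*}
because each summand is the pairing of $\delta$ with the effect difference of a genuine binary test on $\rA$, hence bounded by $\normop{\delta}$. Taking the supremum over $\{b_0,b_1\}$ gives $\normop{\tC\delta}\leq\normop{\delta}$.

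\emph{Reversible implies equality.} If $\tC$ is reversible with inverse $\tC^{-1}\in\tr(\rB\to\rA)$, applying the inequality to both $\tC$ and $\tC^{-1}$ yields $\normop{\delta}=\normop{\tC^{-1}\tC\delta}\leq\normop{\tC\delta}\leq\normop{\delta}$, so the chain collapses and $\normop{\tC\delta}=\normop{\delta}$ for every $\delta$.

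\emph{Equality implies reversible, and the main obstacle.} The converse is the hard part. A cheap observation disposes of the non-injective case: if $\tC\delta=0$ for some $\delta\neq0$ then $\normop{\tC\delta}=0<\normop{\delta}$, since $\normop{\cdot}$ is a genuine separating norm; hence saturation for all $\delta$ forces $\tC$ to be injective on $\st_{\R}(\rA)$, and in particular $D_{\rA}\leq D_{\rB}$. The remaining task is to promote this norm-preserving injection to a two-sided inverse in $\tr_1$, and here the obstacle is sharp: norm preservation by itself only makes $\tC$ an isometric embedding of $\st_{\R}(\rA)$ into $\st_{\R}(\rB)$, which need not be surjective, and a strictly larger output system would already obstruct reversibility. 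One must therefore invoke the further structure of the framework to conclude $D_{\rA}=D_{\rB}$ and that $\tC$ maps $\st_1(\rA)$ onto $\st_1(\rB)$ — using that the isometry respects the convex geometry of the state space, carrying extreme points to extreme points and preserving the deterministic-effect normalisation — after which the linear inverse on $\st_{\R}(\rB)$ is the natural candidate for $\tC^{-1}$, and the final step is to certify that this candidate is an admissible deterministic transformation with $\tC^{-1}\tC=\tI_{\rA}$ and $\tC\tC^{-1}=\tI_{\rB}$. I expect this surjectivity-and-admissibility step, rather than the elementary inequality, to be the crux, and it is the part carried out in Ref.~\cite{PhysRevA.81.062348}.
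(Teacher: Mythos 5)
Your \emph{Inequality} and \emph{Reversible implies equality} parts are correct, and the first is done with appropriate care: since $\tC\in\tr(\rA,\rB)$ may be a non-deterministic event, $\{b_0\tC,b_1\tC\}$ is not by itself an observation test, and your completion to the ternary test $\{a_0,a_1,a_2\}$ followed by the two coarse-grainings and the split $a_0-a_1=\tfrac12[(a_0+a_2)-a_1]+\tfrac12[a_0-(a_1+a_2)]$ handles this correctly. Note that the paper gives no inline proof of this lemma at all: it defers to Ref.~\cite{PhysRevA.81.062348}, and what is proved there is exactly the two facts you established---monotonicity, and equality \emph{when} $\tC$ is reversible. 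So up to this point your blind reconstruction matches the cited argument.

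The genuine gap is the converse direction, which you leave as a program (``invoke further structure to conclude $D_\rA=D_\rB$ and surjectivity''). That program cannot succeed, because the ``only if'' half of the statement is false as written, and your own observation about non-surjective isometric embeddings is precisely the reason. Concretely, take the digitizing maps the paper itself postulates (Definition~\ref{def:digitizability}, realized in BCT by Lemma~\ref{lem:BCTdigitizability}): $\tE\in\tr_1(\rX\to\rB^{\boxtimes k})$ and $\tD\in\tr_1(\rB^{\boxtimes k}\to\rX)$ with $\tD\tE=\tI_\rX$. Then for every $\delta\in\st_{\R}(\rX)$ one has $\normop{\delta}=\normop{\tD\tE\delta}\leq\normop{\tE\delta}\leq\normop{\delta}$, so $\tE$ saturates the bound on \emph{all} of $\st_{\R}(\rX)$; yet $\tE$ is not reversible whenever $\rX$ and $\rB^{\boxtimes k}$ are not operationally equivalent, which in BCT is the generic situation since $D_{\rB^{\boxtimes k}}=2^{k-1}D_{\rB}^{k}>D_{\rX}$ in general (the per-state reading of the ``iff'' fails even more trivially, e.g.~for $\delta=0$). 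Hence the correct resolution is not a harder surjectivity-and-admissibility argument but a weakening of the claim: equality holds \emph{if} $\tC$ is reversible---which is all that Ref.~\cite{PhysRevA.81.062348} proves, and all the paper ever uses (e.g.~the proof of Proposition~\ref{prop:fig_merit} invokes only the inequality for channels); at best, norm preservation for all $\delta$ characterizes left-invertible (isometrically embedding) channels, not reversible ones. Your instinct that surjectivity is the crux was right; the error is in expecting it to be derivable rather than recognizing it as a counterexample to the stated equivalence.
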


An OPT can enjoy different degrees of locality according to the following definition \cite{hardy2012limited}.
\begin{definition}[n-local discriminability] \label{def:nlocal_discr}
Let $n\leq m$. We say that that an OPT satisfies \emph{n-local discriminability} if the effects obtained as a conic combination of the parallel composition of effects $a_1,\dots,a_{n}$,
where $a_j$ is $k_j$-partite with $k_j\leq n$ for all values of j, are separating for $m$-partite states.
\end{definition}
We recall that a $k$-partite effect is an effect on a system that is obtained as the composition of $k$ systems.
The case of $n=1$, that is the case of classical and quantum theory, is simply called \emph{local discriminability}. In the latter case, this property tells us that state tomography can be accomplished by means of local measurements only, and it is equivalent to state that for any pair of
systems $\rA$, $\rB$ with size $D_{\rA}$ and $D_{\rB}$ respectively, the composite system $\rA\rB$ has size $D_{\rA\rB}=D_{\rA}D_{\rB}$. Notice that a
theory that is 
$n$-locally discriminable is also $n+1$-locally discriminable. Fermionic theory, real quantum theory and the bilocal classical theory considered here are examples of strictly bilocal theories, namely they satisfy definition \ref{def:nlocal_discr} with $n=2$ but not for $n=1$. An OPT with local discriminability also necessarily satisfies the following feature.

\begin{definition}\label{prop:atom_paralcomp_states}
An OPT satisfies atomicity of parallel composition of states if, for any pair of systems $\rA$ and $\rB$ and for all pair of atomic states 
$\phi\in\st(\rA)$ and $\psi\in\st(\rB)$, the composition 
$\phi\boxtimes\psi\in\st(\rA\rB)$ is atomic.
\end{definition}

This property is clearly satisfied by classical and quantum theory, and fermionic theory as well. The latter is a 
trivial example showing that the atomicity of parallel composition of states does not imply local discriminability. 
Moreover, examples of theories that violate atomicity of parallel composition of states can be constructed, and we 
will discuss one in detail in the present analysis.

We now introduce two properties that an OPT can have, and that are necessary for our discussion of the information 
content, whose definition will be recalled in the next subsection. These are strong causality and steering.
\begin{definition}[Causal theories]\label{ass:scausality}
An OPT satisfies {\em strong causality} if for every test $\{\tA_{i}\}_{i\in \mathsf{X}}$ and every collection of tests $\{\tB_{j}^{i}\}_{j\in\sY}$
labelled by $j\in\sY$, the collection of events $\{\tC_{i,j}\}_{(i,j)\in\sX\times\sY}$ with
\begin{equation}\label{eq:CondEv}
  \begin{aligned}
    \Qcircuit @C=1em @R=.7em @! R {&\ustick{\rA} \qw&\gate{\tC_{i,j}}& \ustick{\rC}
      \qw&\qw}
  \end{aligned}
  \quad\coloneqq\quad
  \begin{aligned}
    \Qcircuit @C=1em @R=.7em @! R {&\ustick{\rA}\qw&\gate{\tA_i}&\ustick{\rB} \qw &\gate{\tB_{j}^{i}}&\ustick{\rC} \qw
      &\qw}
  \end{aligned}\ 
\end{equation}
is a test of the theory.
\end{definition}
In other words, any theory that is strongly causal contains all the conceivable conditioned test, namely those in 
which the second test that is performed is chosen according to the outcome of the first one. This is a strong notion 
of causality, since  one can show that the above statement implies uniqueness of the deterministic 
effect~\cite{DAriano:2017aa,PhysRevA.81.062348}, that corresponds to a weaker form of causality defined as follows. 
\begin{definition}[Weakly causal theories]\label{ass:wcausality}
An OPT satisfies {\em weak causality} if for every system type $\rA$ there is a unique deterministic effect $e_\rA$.
\end{definition}

In a non deterministic theory (i.e.~a theory where at least one event of the trivial system is neither 0 nor 1) strong 
causality implies convexity of the set of events of every type---including, of course, sets of states of every system \cite{Chiribella2016,DAriano:2017aa}. An example of an OPT  that is weakly but not strongly causal is given in \cite{erba2023incompatibility}. For every system of this theory the set of states is a simplex whose vertices, corresponding to the pure states, are perfectly discriminable, just as in standard classical theory. The fact that the set of states is a simplex for every system implies that the theory is weakly causal (by theorem 1 in \cite{PhysRevA.101.042118}). However, the set of allowed tests between non-trivial systems only includes the composition of preparation and observation tests with permutations of subsystems of composite systems, and does not contain conditional tests, i.e., tests whose events are of the form \eqref{eq:CondEv}. This implies that such a theory is not strongly causal.
%Indeed, if there were
%more than one deterministic effects, one could construct a test from the trivial system to itself which would violate the 
%normalization of the probability distribution.

We now come to the second property of interest, which is the property of steering. 
Steering as we define it is actually a property of both classical and quantum theory. 
\begin{property}[Steering]\label{ass:steering}
Let $\rho\in\st(\rA)$ and $\{\sigma_{i}\}_{i\in \sX}\subseteq\st(\rA)$ be a refinement of $\rho$. Then there exist a system $\rB$, a state $\Psi\in\st(\rA\rB)$, and an
observation test $\{b_{i}\}_{i\in \sX}$ such that
\begin{equation*}
\  \
\begin{aligned}
 \Qcircuit @C=1em @R=.7em @! R { \prepareC{\sigma_{i}} & \ustick \rA \qw & \qw}
\end{aligned}
\ =\
\begin{aligned}
 \Qcircuit @C=1em @R=.7em @! R { \multiprepareC{1}{\Psi} & \ustick \rA \qw& \qw \\
 \pureghost {\Psi} & \ustick \rB \qw&\measureD{b_{i}}}
\end{aligned}\ ,
\qquad \forall i\in \sX.
\end{equation*}
\end{property}
Notice that the state $\Psi$ in the steering assumption must be a dilation of $\rho$ (see definition \ref{def:dilation}), as one can easily verify upon
summing over $i\in\sX$, and noticing that the coarse-graining of all the effects in an observation test yields a deterministic effect. Moreover, we are not requiring that the same dilation $\Psi$ steers all the possible 
refinements of $\rho$, differently from what happens in quantum theory. We remark indeed that in quantum theory a 
stronger version of steering holds. It states that any ensemble of a given state can be steered by means of a 
purification of that state. In more precise terms, given a state $\rho\in\st(\rA)$ and a purification 
$\Phi\in\purst(\rA\rB)$ of $\rho$, for any decomposition $\sum_{i\in\sX} p_i\sigma_i$ of $\rho$ there exists an 
observation test $\{ b_i\}_{i\in\sX}\subseteq\eff(\rB)$ such that
\begin{equation*}
\ p_i \
\begin{aligned}
 \Qcircuit @C=1em @R=.7em @! R { \prepareC{\sigma_{i}} & \ustick \rA \qw & \qw}
\end{aligned}
\ =\
\begin{aligned}
 \Qcircuit @C=1em @R=.7em @! R { \multiprepareC{1}{\Phi} & \ustick \rA \qw& \qw \\
 \pureghost {\Phi} & \ustick \rB \qw&\measureD{b_{i}}}
\end{aligned}\ .
\qquad \forall i\in \sX.
\end{equation*}
However, we do not need such a strong property for our purposes. In the following when referring to steering we 
will be referring to the property~\ref{ass:steering}.

Finally, with the following set of definitions we now make explicit what we mean by classical theory. First of all, we 
say that an OPT is {\em convex} when, for all systems of the theory, the set of states (and more generally of 
transformations) is convex. As we remarked above, this is always the case for non deterministic theories with strong 
causality.
\begin{definition}[simplicial theory]\label{def:simplicialtheories}
A {\em simplicial theory} is a finite-dimensional OPT where the extremal states of every system $\rA$ are the vertices 
of a $D_{\rA}$-simplex.
\end{definition}

\begin{definition}[Joint perfect discriminability]
Let $\rA$ be a system of the theory. A set of states $\{|\rho_i)_{\rA}\}_{i=1}^{n}$ is jointly perfectly discriminable if there exists an observation test $\{(a_i|_{\rA}\}$ such that:
\[
(a_i|\rho_j)_{\rA}\propto\delta_{ij}, \qquad \forall i,j\in\{1,\dots,n\}.
\]
\end{definition}

We now define classical theories only in the restricted context of weakly causal theories. The reason is that the 
definition would otherwise involve complications that are unnecessary for our present purposes. Indeed, the core of 
our analysis will deal with strongly causal theories, which are also weakly causal.

\begin{definition}[classical theories] \label{def:classicaltheories}
A {\em classical theory} is a theory where any set of pairwise non proportional atomic states is jointly perfectly discriminable.
\end{definition}

%\begin{definition}[classical theories] \label{def:classicaltheories}
%A weakly causal {\em classical theory} is a simplicial theory where the pure states of every system are jointly 
%perfectly discriminable.
%\end{definition}

As it is clear, a classical theory in the sense of the above definition is not necessarily convex, despite the pure 
states are the extremal points of a simplex. However, when strong causality is assumed, the scenario is restricted to
convex theories, and thus classicality in that case implies that the set of states is a simplex whose extremal points 
are perfectly discriminable pure states.

While atomicity of parallel composition of states (definition \ref{prop:atom_paralcomp_states}) is always implied by local discriminability, the converse is generally not true, a trivial example being fermionic theory, that is strictly bilocal but state-purity preserving. Nevertheless, a converse holds for n-locally discriminable theories that are also simplicial, as a corollary of theorems 2 and 3 in \cite{PhysRevA.101.042118}: firstly, every simplicial theory satisfying n-local discriminability does not admit the presence of entangled states iff it satisfies atomicity of parallel composition of states (theorem 3 \cite{PhysRevA.101.042118}). But it is also true that a simplicial theory is locally discriminable iff there exist no entangled states (theorem 2 \cite{PhysRevA.101.042118}). Whereby, a simplicial n-locally discriminable theory enjoys atomicity of parallel composition of states iff it is actually locally discriminable.

In the following we will deal with strongly causal classical theories, where
the set of states of any system is a simplex whose extreme points, except for the null state (i.e. the state that yields zero on all the effects), are pure. Thus, atomicity of parallel composition for states is equivalent to purity of parallel composition of pure states.
 
\subsection{Digitizable theories and the information content}\label{subsec:infoContent}
In a general physical theory of information, information plays of course a central role. It is then reasonable to 
expect that one has a way to  quantify it, and this in turn has to come along with a unit. In classical 
and quantum theories, the output of any physical source is digitized in terms of bits and qubits, respectively. In 
particular, focusing on the case of classical information theory, any string of length $N$, say $\bi\coloneqq i_1\dots i_N$, 
where each $i$ is extracted from an alphabet of $d$ symbols, can be perfectly encoded on a distinct array of a 
suitable number of bits. The bit is indeed the reference system that we usually adopt as a unit for assessing the 
amount of information of a given classical source; an analogous role is played by the qubit in quantum information 
theory. Moreover, notice that one can actually choose any other system for the digitization of the outputs of a 
classical or quantum information source, with the only difference being that the entropy function, which numerically 
quantifies the information content, must be rescaled by a suitable multiplicative factor.

These simple observations can be thoroughly made into a requirement that a theory must abide by, thus setting the 
stage for a meaningful definition of compression rate in the OPT framework. This is formalized as follows.
\begin{definition}[Digitizability]\label{def:digitizability}
We say that an OPT is digitizable if there exists a system $\rB$ (called \emph{o-bit}) such that for any system $\rX$ there
exists $k<\infty$ and a pair of maps $\tC\in\tr_1(\rX\to\rB^{\boxtimes k})$ and 
$\tF\in\tr_1(\rB^{\boxtimes k}\to\rX)$ such that $\tF\circ\tC=\tI_{\rX}$. Moreover, if $\rB_1$ and $\rB_2$ are two such systems, then they are asymptotically equivalent.
\end{definition}
As we have already mentioned, finite-dimensional classical and quantum theory are digitizable. In particular, all the systems are asymptotically equivalent in the sense of definition \ref{def:asympeq}, and any system can serve as an o-bit. The OPT considered in \cite{erba2023incompatibility}, on the other hand, represents
an example of a theory that is not digitisable. This theory was constructed for a different purpose rather than studying digitizability, but it is not difficult to realize that it does not satisfy the latter property. As we have already mentioned, in this theory all systems are classical, in the sense that for any system the set of states is a simplex, and the vertices corresponding to non-null extremal states are perfectly jointly discriminable. Moreover,  for any integer $D$ there exists a type of system of size $D$. The set of allowed channels between non-trivial systems is restricted to contain compositions of deterministic states and effects with permutations of subsystems of composite systems. It is then clear that, given a system of type $\rA$, for every other system $\rB$ of different type it is not possible to find an integer $M$ and a left-reversible map of type $\rA\to\rB^{\boxtimes M}$. Indeed, such a left-reversible map should have the $\rA$ system as a subsystem at the output in order to be left-reversible, and given the uniqueness of the decomposition of integers in prime factors, this is impossible. Thus, the theory cannot be digitisable.
%As already mentioned, while the sets of states are those of standard finite-dimensional classical theory, trasnformations are obtained via parallel and/or sequential composition of states, effects and permutation of wires. Essentially, the fact that only the exchange of systems between agents is allowed implies that given a system $\rA$, for any other system $\rB$ of different type we cannot find an integer $M$ such that a left-reversible encoding from $\rA$ to $\rB^{\boxtimes k}$ exists. In other words, there is not a candidate system for the o-bit. 

An \emph{information source} is characterized by a system $\rA$ and a state $\rho\in\st(\rA)$. Repeated uses of the 
source then generate a {\em message} that, generalizing the i.i.d. setting, has the form of a factorized state, 
$\rho^{\boxtimes N}$. Given a pair of maps $\tE\in\tr_1(\rA^{\boxtimes N}\rightarrow\rB^{\boxtimes M})$ and 
$\tD\in\tr_1(\rB^{\boxtimes M}\rightarrow\rA^{\boxtimes N})$, that we name \textit{encoding} and \textit{decoding} 
respectively, we need a figure of merit that establishes the goodness of the {\em codec} map $\tC\coloneqq\tD\tE$. Our choice comes as a consequence of the two following elementary observations: 
(i) the output message to which we have local access may be correlated with an ancilla, and (ii) it could be prepared 
in different ways. Since the aim is to preserve all the possible information gathered in the message, we use the 
following quantity
\[
\begin{aligned}
D(\rho^{\boxtimes N},\tC)& \\
\coloneqq\sup_{\Psi\in D_{\rho^{\boxtimes N}}} & \sup_{\{\Gamma_i\}\in \cR(\Psi)}\sum_{i}\normop{\tC\boxtimes\tI(\Gamma_i)-\Gamma_i}.
\end{aligned}
\]
In other words, it is an optimization of the average error over all the possible decompositions of any dilation of the message.
 We then define an \emph{$(N,\epsilon)$-compression scheme} as an encoding-decoding pair $(\tE,\tD)$ such that $D(\rho^{\boxtimes N},\tD\tE)<\epsilon$, and we denote by $E_{N,M,\epsilon}(\rho)$ the set of all the $(N,\epsilon)$-compression schemes. We now have all the ingredients for defining the information content of a source.
\begin{definition}[Information Content]\label{eq:IC-def}
Let $\rho\in\st_1(\rA)$ and 
\begin{equation}\label{eq:smallestM}
M_{N,\varepsilon}(\rho)\coloneqq\min\{ M:E_{N,M,\varepsilon}(\rho)\neq\emptyset\}. 
\end{equation}
We define the {\em smallest achievable compression ratio} for length $N$ to
tolerance $\varepsilon$ as follows
\begin{equation}\label{eq:inforate}
R_{N,\varepsilon}(\rho)\coloneqq\frac{\min\{ M:E_{N,M,\varepsilon}(\rho)\neq\emptyset\}}{N}.
\end{equation}
The {\em information content} of the state $\rho$ is defined as
\begin{equation}\label{eq:infocontent}
I(\rho)\coloneqq\lim_{\varepsilon\rightarrow 0}\limsup_{N\rightarrow \infty}R_{N,\varepsilon}(\rho).
\end{equation}
\end{definition}
The well-posedness of this definition has been proved in \cite{PhysRevA.105.052222}, as a direct consequence of the digitizability assumption. As it is clear, the definition of $I(\rho)$ strongly depends on the
choice of the figure of merit $D(\rho^{\boxtimes N},\tC)$. The one that we have introduced takes into account all possible refinements of any dilation of $\rho^{\boxtimes N}$. Alternatively, one can consider the pure convex refinements of any dilation of $\rho^{\boxtimes N}$ and define 
\[
\begin{aligned}
D^{\text{pur}}(\rho^{\boxtimes N},\tC)& \\
\coloneqq\sup_{\Psi\in D_{\rho^{\boxtimes N}}} & \sup_{\{p_i,\Psi_i\}\in \cP(\Psi)}\sum_{i}p_i\normop{\tC\boxtimes\tI(\Psi_i)-\Psi_i}.
\end{aligned}
\]
When the theory satisfies strong causality (property \ref{ass:scausality}), and if any state is proportional to a determistic one, $D$ and $D^{\rm pur}$ can be equivalently used to define the information content via the formula \eqref{eq:infocontent}. If a theory also satisfies steering (property \ref{ass:steering}), along with strong causality, one can use a figure of merit that is further simplified \cite{PhysRevA.105.052222}

\begin{equation}
\begin{aligned}
D^{\text{dil}}(\rho^{\boxtimes N},\tC) \coloneqq\sup_{\Psi\in D_{\rho^{\boxtimes N}}} \normop{\tC\boxtimes\tI(\Psi)-\Psi}.
\end{aligned}
\end{equation}

\section{A case study: the information content in Bilocal Classical Theory}

Bilocal Classical Theory (hereafter BCT) is a fully-fledged operational probabilistic theory that has been constructed with the aim of showing the independence of two fundamental notions: entanglement and complementarity. As we have already recalled, simplicial theories, in the sense of definition \ref{def:simplicialtheories}, admit of the presence of entangled states iff they do not satisfy the local discriminability principle (see Theorem 2 in \cite{PhysRevA.101.042118}), and this can be explicitly seen in the case of BCT. Here we outline the features of the theory that are relevant for the present work, in particular we recall the structure of state spaces, which are essentially the same of those of classical theory, and more importantly the way in which systems are composed in parallel (Postulate 1 and 2 of \cite{PhysRevA.101.042118}). And the characterization of channels. For a detailed account we refer the reader to \cite{PhysRevA.102.052216}.

%\begin{postulate}
%Bilocal Classical Theory is weakly causal, classical and convex. For every integer $D>1$ there exists a unique system 
%with size $D$.
%\end{postulate}

%\begin{postulate}[{Parallel composition}]\label{pos:parallel}
%For any two systems $\rA,\rB$, the size of the composite system $\sys{AB}$ is given by the following rule:
%\begin{align}\label{eq:dimensions}
%D_{\rA\rB} =D_{\rB\rA}= \left\{
%\begin{aligned}
%&2D_{\rA}D_{\rB},&\mbox{if}\ \rA\neq\rI\neq\rB,\\
%&D_{\rA},&\mbox{if}\ \rB=\rI.
%\end{aligned}
%\right.
%\end{align}
%Let $\rI\neq\rA,\rB,\rC$. We denote the set of pure states of any composite system $\rA\rB$ as 
%$\purst(\rA\rB)=\{\rket{(ij)_-}_{\rA\rB},\rket{(ij)_+}_{\rA\rB} \left.|\right. 1\leq i\leq D_{\rA},1\leq j\leq D_{\rB}\}$, 
%so that for all pure states $\rket{i}_\rA\in\purst(\rA),\rket{j}_{\rB}\in\purst(\rB)$ the following parallel 
%composition rule holds:
%\begin{align}\label{eq:state_composition}
%\begin{aligned}
%\Qcircuit @C=1em @R=1em
%{   &&\prepareC{i}&\ustick\rA\qw&\\
%&\prepareC{j}&\ustick \rB\qw&\qw }
%\end{aligned}
%=
%\frac{1}{2}\sum_{s=-,+}
%\begin{aligned}
%\Qcircuit @C=1em @R=1em
%{   \multiprepareC{1}{(ij)_{s}}&\ustick \rA\qw&\\
%\pureghost{(ij)_{s}}&\ustick \rB\qw& }
%\end{aligned}.
%\end{align}
%When a third system is added the association satisfies the following law:
%\begin{align}\label{eq:quadripartite}
%((ij)_{s_1}k)_{s_2} = (i(jk)_{s_1s_2})_{s_1},
%(((ij)_{s_1}k)_{s_2}l)_{s_3} = ((i(jk)_{s_1s_2})_{s_1}l)_{s_3},
%\end{align}
%for all local indices $i,j,k$ and signs $s_1,s_2$.
%\end{postulate}

From the point of view of the state space, BCT does
  not differ from standard classical theory. In BCT, for every integer
  $D>1$ it is assumed that there exists a unique system with size
  $D$. The set of states of each system is given by a simplex whose
  vertices are jointly perfectly discriminable, whence BCT is a convex
  classical theory, according to definition
  \ref{def:classicaltheories}. In particular, for any system $\rA$,
  the extremal states are the vertices of the simplex $\st(\rA)$, and
  the non-null extremal states, say
  $\{\rket{i}_{\rA}\}_{i=1}^{D_\rA}$, are perfectly discriminated by
  an observation test $\{\rbra{j}_{\rA}\}_{j=1}^{D_\rA}$,
  i.e.~$(j|i)_{\rA}=\delta_{ij}$. Moreover, given that the pure states
  correspond to the vertices of a simplex, any other state can be
  uniquely decomposed in terms of pure ones. In
  \cite{PhysRevA.101.042118} (see in particular theorem 1 therein) it
  has been shown that simplicial theories are necessarily weakly
  causal, whereby BCT satisfies weak causality.  

The salient features of BCT are a consequence of the
  behaviour of parallel composition of states. By theorem 4 in
  \cite{PhysRevA.101.042118}, in a simplicial\ theory the state space
  $\st(\rA\rB)$ of a composite system $\rA\rB$ is fully characterised
  by: i) a choice for the size $D_{\rA\rB}$ of the composite system
  $\rA\rB$; ii) an unambiguous labeling for the pure states of
  $\rA\rB$ as $\rket{(ij)_k}_{\rA\rB}$---for
  $i\in\{1,\dots,D_{\rA}\}$, $j\in\{1,\dots,D_{\rB}\}$, and $k$ in
  finite sets $I_{ij}$---and a choice of probability distributions
  $p_k^{ij}$ such that
\[
  \rket{i}_{\rA}\boxtimes\rket{j}_{\rB} = \sum _{k\in
    I_{ij}}p_k^{ij}\rket{(ij)_k}_{\rA\rB}.
\]
BCT is an explixit example of how the situation described above can
be consistently realised.  In the first place, the following rule for
the size of bipartite systems is postulated: for any two systems
$\rA,\rB$, the size of the composite system $\sys{AB}$ is given by:
\begin{align}\label{eq:dimensions}
D_{\rA\rB} =D_{\rB\rA}= \left\{
\begin{aligned}
&2D_{\rA}D_{\rB},&\mbox{if}\ \rA\neq\rI\neq\rB,\\
&D_{\rA},&\mbox{if}\ \rB=\rI.
\end{aligned}
\right.
\end{align}
Recalling that a theory satisfies local discriminability if and only
if $D_{\rA\rB}=D_{\rA}D_{\rB}$, the fact that BCT does not enjoy such
a property follows from the above compositional law
\eqref{eq:dimensions}. Actually BCT is strictly
bilocal~\cite{PhysRevA.102.052216}, since it satisfies the following
constraint on the dimension of tripartite systems (see theorem 2 in
\cite{PhysRevA.102.052216})
\[
  D_{\rA\rB\rC}=D_\rA D_\rB D_\rC + \Delta_{\rA\rB}D_{\rC} +
  \Delta_{\rB\rC}D_{\rA} + \Delta_{\rA\rC}D_{\rB}
\]
where $\Delta_{\rA\rB}\coloneqq D_{\rA\rB}-D_{\rA}D_{\rB}$, as one can easily
check using equation \eqref{eq:dimensions}.
The unambiguous labeling for the pure states of the composite system is made as follows: 
for any $i\in\{1,\dots,D_{\rA}\}$, $j\in\{1,\dots,D_{\rB}\}$, the set
$I_{ij}$ is given by $\{+,-\}$, and $p^{ij}_k\coloneqq\frac{1}{2}$. 
Let $\rI\neq\rA,\rB,\rC$. The set of pure states of any composite system $\rA\rB$ is then 
$\purst(\rA\rB)=\{\rket{(ij)_-}_{\rA\rB},\rket{(ij)_+}_{\rA\rB} \left.|\right. 1\leq i\leq D_{\rA},1\leq j\leq D_{\rB}\}$, 
so that for all pure states $\rket{i}_\rA\in\purst(\rA),\rket{j}_{\rB}\in\purst(\rB)$ the following parallel 
composition rule holds:
\begin{align}\label{eq:state_composition}
\begin{aligned}
\Qcircuit @C=1em @R=1em
{   &&\prepareC{i}&\ustick\rA\qw&\\
&\prepareC{j}&\ustick \rB\qw&\qw }
\end{aligned}
=
\frac{1}{2}\sum_{s=-,+}
\begin{aligned}
\Qcircuit @C=1em @R=1em
{   \multiprepareC{1}{(ij)_{s}}&\ustick \rA\qw&\\
\pureghost{(ij)_{s}}&\ustick \rB\qw& }
\end{aligned}.
\end{align}
When a third system is added the association satisfies the following law:
\begin{align}\label{eq:quadripartite}
((ij)_{s_1}k)_{s_2} = (i(jk)_{s_1s_2})_{s_1},
%(((ij)_{s_1}k)_{s_2}l)_{s_3} = ((i(jk)_{s_1s_2})_{s_1}l)_{s_3},
\end{align}
for all local indices $i,j,k$ and signs $s_1,s_2$.
As already mentioned, as a consequence of bilocality BCT admits of entangled states. More precisely, all the pure states of any bipartite system are entangled \cite{PhysRevA.102.052216}.

According to equation \eqref{eq:state_composition}, each time we compose in parallel a state with itself, there is an additional degree of freedom $s$ associated with a sign $+$ or $-$ which can be either of the two values with the same probability $\frac{1}{2}$. This entails that a message of length $N$ that is emitted from a source, described by $\sum_i p_i |i)_{\rA}=|\rho)_{\rA}\in\st_1(\rA)$, can be written as follows
\begin{equation}\label{eq:message}
|\rho^{\boxtimes N})_{\rA^N}=\sum_{\bi,\s}p_{\bi}\frac{1}{2^{N-1}}|\bi_{\s})_{\rA^N},
\end{equation}
where $\bi$ and $\s$ collectively denote the string of $N$ local indices and of $N-1$ signs respectively. Notice that,
according to the rule of Eq.~\eqref{eq:quadripartite}, the string of signs $\s$ depends on the order in which the $N$ systems 
are associated. If the order of composition changes, however, one just has a change in the string of signs 
$\s'=f(\s)$, according to Eq.~\eqref{eq:quadripartite}, which is immaterial since $\s$ is a dummy index and $f$ is an 
invertible function. Anyway, for the sake of clarity, we will ubiquitously adopt the convention that the expression 
in Eq.~\eqref{eq:message} refers to the composite system $(\ldots((\rA_1\rA_2)\rA_3)\ldots\rA_{N-1})\rA_N$. 

For our purpose it is of fundamental importance to know how transformations, and in particular channels, 
are characterized, since this establishes how much freedom we have in 
devising suitable compression schemes $(\tE,\tD)$. In
\cite{PhysRevA.102.052216} the authors first introduce the reversible
transformations as follows (see in particular postulate 3): for any
pair of non-trivial systems $\rA$ and $\rA'$ of the same size
$D_{\rA}$, $\tR$ is a reversible transformation if and only if there
exists a permutation of $D_{\rA}$ elements and a sign $\tau_i$ for every $i$ such
that for any non trivial $\rE$ and $\rket{(ij)_s}\in\purst(\rA\rB)$
one has
\begin{align}\label{eq:revtransfBCT}
\begin{aligned}
\Qcircuit @C=1em @R=.9em @! R { \multiprepareC{1}{(ij)_s} & \ustick  \rA \qw  & \gate {\tR} & \ustick  {\rA'} \qw &\qw  \\
\pureghost {(ij)_s} &  \qw  & \ustick \rE \qw &   \qw  &\qw }
\end{aligned}
\ = \
\begin{aligned}
\Qcircuit @C=1em @R=.7em @! R { \multiprepareC{1}{(\pi(i)j)_{\tau_i s}} & \ustick {\rA'} \qw & \qw \\
\pureghost {(\pi(i)j)_{\tau_i s}} & \ustick \rE \qw & \qw}
\end{aligned}.
\end{align}
Then, transformations are introduced as those admitting a reversible
dilation (postulate 4 in \cite{PhysRevA.102.052216}). More precisely,
for any pair of systems $\rA$ and $\rB$, $\tT\in\tr(\rA\to\rB)$ if and
only if there exists a reversible transformation $\tR\in\tr(\rB'\rA\to\rA'\rB)$, a state $\rket{\Sigma}_{\rB'}$ and an effect $\rbra{H}_{\rA'}$ such that 
\begin{align*}
\begin{aligned}
\Qcircuit @C=1em @R=.9em @! R { &\ustick{\rA} \qw & \gate{\tT} & \ustick{\rB} \qw }
\end{aligned}
\ = \
\begin{aligned}
\Qcircuit @C=1em @R=.7em @! R { \prepareC{\Sigma} & \ustick {\rB'} \qw & \multigate{1}{\tR}&  \ustick {\rA'} \qw& \measureD{H}\\
	&  \ustick \rA \qw & \ghost{\tR} &  \ustick {\rB'} \qw&\qw}
\end{aligned}.
\end{align*}
The above realisation of a BCT transformation is given in an implicit
form via its reversible simulation, whose action is explicitly defined
in equation \eqref{eq:revtransfBCT}. The following result, proved in
\cite{PhysRevA.102.052216} (see in particular lemma 3 and 4 in
appendix A), provides an explicit characterisation of transformations
and channels:
\begin{proposition}[Characterisation of BCT transformations]\label{prop:BCTchannels}
$\tC\in\tr(\rA\rightarrow\rB)$ iff for every $i\in\{1,\dots,D_{\rA}\}$ there exists a set $\{\lambda_{m,\tau}^{(i)}\}_{(m,\tau)\in I^{(i)}}$,
with $I^{(i)}\subseteq\{1,\dots,D_{\rB}\}\times\{+,-\}$, $\lambda_{m,\tau}^{(i)}>0$ for every $(m,\tau)\in I^{(i)}$ and $\sum_{(m,\tau)\in I^{(i)}}\lambda^{(i)}_{m,\tau}\leq 1$, such that the following holds for all $\rket{(ij)_s}_{\rA\rE}\in\purst(\rA\rE)$:
\begin{align*}
\begin{aligned}
\Qcircuit @C=1em @R=.7em @! R { \multiprepareC{1}{(ij)_s} & \qw  &\ustick  \rA \qw &\qw &\qw \\
\pureghost {(ij)_s} & \ustick \rE  \qw  & \gate {\tC} & \ustick \rF  \qw  &\qw }
\end{aligned}
\ =\sum_{m,\tau}\lambda_{m\tau}^{(j)} \
\begin{aligned}
\Qcircuit @C=1em @R=.7em @! R { \multiprepareC{1}{(im)_{\tau s}} & \ustick \rA \qw & \qw \\
\pureghost {(im)_{\tau s}} & \ustick \rF \qw & \qw}
\end{aligned}.
\end{align*}
If the transformation is determinsitic, then $\{\lambda_{m,\tau}^{(i)}\}_{(m,\tau)\in I^{(i)}}$ is a probability distribution for every $i$.
\end{proposition}

We can now analyse the behaviour of the information
  content in BCT by applying the formalism developed in subsection
  \ref{subsec:infoContent}. The first fact that must be checked is
that BCT is a digitizable theory, in the sense of
definition~\ref{def:digitizability}. This is a mandatory step that
allows for a meaningful operational definition of information content,
as discussed at the end of the previous section. The proof is given in
appendix \ref{app:proofs} (lemma \ref{lem:BCTdigitizability}). In BCT,
as in classical and quantum theory, any type of system can serve as
o-bit. Here we choose the type of system with $D=2$, that we will call
\emph{bibit} from now on.

The major obstacle in computing the information content of a given state is the complexity of the figure of merit. The 
greater is the set of states on which we must validate the compression scheme, the more difficult is to devise one 
that works as we wish. However, for any state $\rho\in\st_1(\rA)$ of BCT, one can find a dilation $\Pi\in\st(\rA\rE)$ 
from which we can compute all the other ones by applying a suitable channel on the ancillary system $\rE$, as it is 
stated in the following proposition.
\begin{proposition}\label{prop:motherdilation}
Let $\sum_ip_i|i)_{\rA}=|\rho)_{\rA}\in\st_1(\rA)$ and let $\sum_{ijs}q_{ijs}|(ij)_s)_{\rA\rF}=|\Psi)_{\rA\rF}\in\st_1(\rA\rF)$ be a dilation of $\rho$. Let $\rE\cong\rA$ and $\sum_{iks}p_{iks}|(ik)_s)=|\Pi)_{\rA\rE}\in\st_1(\rA\rE)$ be the dilation of $\rho$ with joint probability distribution defined as
\[
p_{iks}\coloneqq\left\{
\begin{aligned}
&\delta_{ik}p_i,&s=+; \\
&0, &s=-.
\end{aligned}
\right.
\]
Then, there exists a channel $\tC\in\tr_1(\rE,\rF)$ such that 

\begin{align*}
\begin{aligned}
\Qcircuit @C=1em @R=.7em @! R { \multiprepareC{1}{\Psi} & \ustick \rA \qw & \qw \\
\pureghost {\Psi} & \ustick \rF \qw & \qw}
\end{aligned}
\ = \
\begin{aligned}
\Qcircuit @C=1em @R=.7em @! R { \multiprepareC{1}{\Pi} & \qw  &\ustick  \rA \qw &\qw &\qw \\
\pureghost {\Pi} & \ustick \rE  \qw  & \gate {\tC} & \ustick \rF  \qw  &\qw }
\end{aligned}.
\end{align*}
\end{proposition}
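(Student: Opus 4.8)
The plan is to reduce the existence of $\tC$ to the explicit characterization of channels in BCT (the Channels property), reading off the required transition probabilities directly from the coordinates $q_{ijs}$ of the target dilation $\Psi$. First I would apply an as-yet-unknown channel to $\Pi=\sum_i p_i\rket{(ii)_+}_{\rA\rE}$ through the Channels property, writing it via a family of candidate distributions $\{\lambda_{m\tau}^{(i)}\}$. Using $\tau\cdot{+}=\tau$ this gives
\[
(\tI_\rA\boxtimes\tC)\rket{\Pi}=\sum_{i,m,\tau}p_i\lambda_{m\tau}^{(i)}\rket{(im)_{\tau}}_{\rA\rF}.
\]
Comparing with $\rket{\Psi}=\sum_{ijs}q_{ijs}\rket{(ij)_s}_{\rA\rF}$ and invoking the linear independence of the pure states $\rket{(ij)_s}$ (they are the vertices of the state simplex of $\rA\rF$, so the decomposition is unique), the identity $(\tI_\rA\boxtimes\tC)\Pi=\Psi$ is equivalent to the system $p_i\lambda_{js}^{(i)}=q_{ijs}$. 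This dictates the definition $\lambda_{js}^{(i)}:=q_{ijs}/p_i$ whenever $p_i>0$, and an arbitrary distribution when $p_i=0$.

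Second, I would verify that each $\{\lambda_{js}^{(i)}\}_{j,s}$ is a genuine probability distribution, which is precisely the hypothesis required by the Channels property. Nonnegativity is immediate, since the $q_{ijs}$ are the nonnegative simplex coordinates of $\Psi$. Normalization amounts to showing $\sum_{j,s}q_{ijs}=p_i$, and this is the only nontrivial input: it follows from the fact that $\Psi$ is a dilation of $\rho$, so that applying the unique deterministic effect $e_\rF$ yields $(\tI_\rA\boxtimes e_\rF)\Psi=\rho=\sum_i p_i\rket{i}_\rA$. Thus everything hinges on computing the marginal of each pure state on the first factor.

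Third, and this is the main technical point, I would establish that $(\tI_\rA\boxtimes e_\rF)\rket{(ij)_s}=\rket{i}_\rA$ for each sign $s$ separately. Marginalizing the composition rule \eqref{eq:state_composition}, namely $\rket{i}_\rA\boxtimes\rket{j}_\rF=\tfrac12\sum_s\rket{(ij)_s}$, only delivers the averaged statement $\tfrac12\sum_s(\tI_\rA\boxtimes e_\rF)\rket{(ij)_s}=\rket{i}_\rA$, so the two sign-marginals average to the vertex $\rket{i}_\rA$. To break the sign symmetry I would argue that each $\rket{(ij)_s}$ is pure, hence deterministic, so its marginal is again a deterministic state of $\rA$, i.e. a point of the state simplex; since $\rket{i}_\rA$ is an extreme point of that simplex, the only way it can be written as the average of two simplex points is if both coincide with it. This yields the per-sign marginal, and hence $\sum_{j,s}q_{ijs}=p_i$ by linear independence of $\{\rket{i}_\rA\}$.

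Finally I would assemble the pieces: the normalized, nonnegative family $\{\lambda_{js}^{(i)}\}$ defines, through the Channels property, a legitimate channel $\tC\in\tr_1(\rE,\rF)$, and by construction $(\tI_\rA\boxtimes\tC)\Pi=\Psi$. The case $p_i=0$ is harmless, since then $\sum_{j,s}q_{ijs}=0$ forces $q_{ijs}=0$ for all $j,s$, so the arbitrary choice of $\lambda^{(i)}$ contributes to neither side. I expect the extremality argument of the third step to be the crux, as it is what excludes a sign-dependent marginal and thereby secures the normalization that promotes the formal solution $\lambda_{js}^{(i)}=q_{ijs}/p_i$ to an actual channel.
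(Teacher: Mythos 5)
Your proposal is correct and follows essentially the same route as the paper's proof: invoke the Channels property, apply the candidate channel to $\Pi$, use the affine independence of the pure states of the simplex $\st(\rA\rF)$ to reduce the identity $(\tI_\rA\boxtimes\tC)\Pi=\Psi$ to the system $p_i\lambda_{js}^{(i)}=q_{ijs}$, and solve by $\lambda_{js}^{(i)}=q_{ijs}/p_i$. The only difference is that you explicitly justify the normalization $\sum_{j,s}q_{ijs}=p_i$ via the per-sign marginal $(\tI_\rA\boxtimes e_\rF)\rket{(ij)_s}=\rket{i}_\rA$ (your extremality argument is sound) and handle the $p_i=0$ case, both of which the paper's proof takes for granted.
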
 
The proof of this proposition and of all the other results of this and the subsequent section are given in appendix \ref{app:proofs}.
The above statement, along with the fact that BCT satisfies the strong causality principle and the steering property, straightforwardly implies the following 
proposition, which drastically simplifies the task of devising a compression scheme.
\begin{proposition}\label{prop:fig_merit}
Given a state $\rho\in\st_1(\rA)$ and a compression scheme $(\tE,\tD)$ for a message of length $N$, the figure of merit can be computed according to the following formula
\begin{multline}\label{eq:BCTfom}
\tilde{D}(\rho^{\boxtimes N},\tC)= \\ 
\sum_{\is}\frac{1}{2^{N-1}}p_{\bf{i}}\normop{(\tC-\tI_\rA^N)\boxtimes\tI_{\rA^N})|(\is\is)_+)_{\rA^N\rA^N}}.
\end{multline}
In other words, defining $\tilde{I}(\rho)$ analogously to $I(\rho)$ (equation \eqref{eq:infocontent}) by replacing $D(\rho^{\boxtimes N},\tC)$ with 
 $\tilde{D}(\rho^{\boxtimes N},\tC)$, it holds that $I(\rho)=\tilde{I}(\rho)$.
\end{proposition}

\begin{theorem}\label{thm:main_thm}
Let $\rA$ be a BCT system and $\sum_{i=1}^{D_{\rA}}p_i|i)_{\rA}=\rho\in\st_1(\rA)$. Then 

\begin{equation}
I(\rho)=\frac{H(\p)+1}{2}.\label{eq:BCTinfocontent}
\end{equation}
\end{theorem}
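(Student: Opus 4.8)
The plan is to reduce to the simplified figure of merit and then run a typical-sequence argument against the size of the obit register. By Proposition~\ref{prop:fig_merit} one has $I(\rho)=\tilde I(\rho)$, so it suffices to control $\tilde D(\rho^{\boxtimes N},\tC)$, the weighted codec error on the pure components $\rket{(\is\is)_+}_{\rA^N\rA^N}$ of the mother dilation of Proposition~\ref{prop:motherdilation}. As $\is=(\bi,\s)$ runs over the $D_{\rA^N}=2^{N-1}D_\rA^N$ perfectly discriminable pure states of $\rA^N$, these components carry the weights $p_\is=p_\bi/2^{N-1}$ with which $\rho^{\boxtimes N}$ mixes them in Eq.~\eqref{eq:message}. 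Since the index string is i.i.d.\ with law $\p$ while the sign string is independent and uniform, this distribution has Shannon entropy $H_N=N\,H(\p)+(N-1)$. Taking the bibit as obit, Postulate~\ref{pos:parallel} gives $D_{\rB^{\boxtimes M}}=2^{2M-1}$, and the whole theorem amounts to matching this number against $2^{H_N}$.

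For the achievability bound I would fix $\delta>0$ and apply the asymptotic equipartition property to $\{p_\is\}$: the typical set $T_N$ satisfies $\sum_{\is\in T_N}p_\is\to1$ and $|T_N|\le2^{H_N+N\delta}$. Choosing the least $M$ with $2M-1\ge H_N+N\delta$ makes $M/N\to(H(\p)+1)/2+\delta/2$ and leaves room for an injection $E\colon T_N\hookrightarrow\{1,\dots,2^{2M-1}\}$ into the pure-state labels of $\rB^{\boxtimes M}$. I would then let $\tE\in\tr_1(\rA^N\to\rB^{\boxtimes M})$ send each input index $\is\in T_N$ deterministically to $E(\is)$ with sign $+$ (and every other index to a fixed state), and let $\tD\in\tr_1(\rB^{\boxtimes M}\to\rA^N)$ invert $E$ on its image. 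By the ``if'' direction of the Channels property these prescriptions define legitimate deterministic channels, since the stochastic data depend only on the input index and are therefore automatically consistent across all spectator indices and global signs. On every component with $\is\in T_N$ the codec $\tC=\tD\tE$ then acts as the identity, whereas $\normop{\cdot}\le2$ always holds, so $\tilde D(\rho^{\boxtimes N},\tC)\le2\sum_{\is\notin T_N}p_\is\to0$. Thus for large $N$ this is an $(N,\epsilon)$-scheme of rate $M/N$, and letting $\delta,\epsilon\to0$ gives $I(\rho)\le(H(\p)+1)/2$.

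For the converse I would start from an arbitrary $(N,\epsilon)$-scheme $(\tE,\tD)$ whose codec $\tC=\tD\tE$ factors through $\rB^{\boxtimes M}$, so that $\tilde D(\rho^{\boxtimes N},\tC)<\epsilon$. A Markov estimate then shows that the good set $G=\{\is:\normop{(\tC-\tI)\boxtimes\tI\,\rket{(\is\is)_+}}<\sqrt\epsilon\}$ carries probability $\sum_{\is\in G}p_\is>1-\sqrt\epsilon$. The heart of the argument is the bottleneck bound $|G|\le2^{2M-1}$. Here one must avoid the naive distinguishability count, which fails because the ancilla $\rA^N$ already records $\is$; instead I would apply the Channels property to $\tE$ and to $\tD$ to express the effective index map $\is\mapsto\is'$ induced by $\tC$ as a stochastic map factoring through the intermediate label $m\in\{1,\dots,2^{2M-1}\}$, with the signs tracked through Eq.~\eqref{eq:quadripartite}. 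Because the components $\rket{(\is\is)_+}$ are perfectly discriminable, the operational-norm error controls the total variation of this map from the identity, so for $\is\in G$ the dominant encoding label $e(\is)$ carries almost all the weight. If $e(\is_1)=e(\is_2)$ for distinct $\is_1,\is_2\in G$ then the single decoding law attached to that label would have to be close to both $\delta_{\is_1}$ and $\delta_{\is_2}$, which is impossible for small $\epsilon$; hence $e$ is injective on $G$ and $|G|\le2^{2M-1}$. Combining this with $\sum_{\is\in G}p_\is>1-\sqrt\epsilon$ and the converse side of the equipartition property (the mass of the $2^{R}$ largest weights stays bounded away from $1$ once $R<H_N-N\delta$) forces $2M-1\ge H_N-N\delta$, i.e.\ $M/N\ge(H(\p)+1)/2-\delta/2$. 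Sending $N\to\infty$ and $\epsilon\to0$ yields $I(\rho)\ge(H(\p)+1)/2$, which together with the direct part establishes Eq.~\eqref{eq:BCTinfocontent}.

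I expect the bottleneck bound $|G|\le2^{2M-1}$ to be the main obstacle. It is the single point where the non-classical composition of BCT enters quantitatively---one system $\rA^N$ with $2^{N-1}D_\rA^N$ pure states forced through a register of only $2^{2M-1}$ labels---and it must be established for an approximate, rather than exact, preservation of the mother-dilation components, all while bookkeeping the signs generated by the Channels property. By comparison, the explicit encoder and decoder of the direct part are routine once the Channels property is invoked, the only care being the verification of their sign structure.
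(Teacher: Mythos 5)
Your overall route is the same as the paper's: reduce to the simplified figure of merit via Proposition~\ref{prop:fig_merit} and the mother dilation of Proposition~\ref{prop:motherdilation}, prove achievability by injecting the typical strings (joint entropy $N H(\p)+N-1$) into the $2^{2M-1}$ pure-state labels of $\rB^{\boxtimes M}$ with a deterministic sign-$+$ encoder and inverse decoder, and prove the converse by a bottleneck count against the number of register labels combined with the fact that any family of at most $2^{2NR-1}$ strings, $R<\frac{H(\p)+1}{2}$, carries vanishing weight (the paper isolates exactly this as Lemma~\ref{lem:typlemma}). Your direct part is essentially identical to the paper's and is correct, including the verification through the ``if'' direction of the Channels property.

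The converse, however, contains one genuinely false step: the claim that for $\is\in G$ ``the dominant encoding label $e(\is)$ carries almost all the weight''. A stochastic encoder can split its mass over several labels without incurring any error: take $\lambda^{(\is)}_{\ts_1+}=\lambda^{(\is)}_{\ts_2+}=\tfrac{1}{2}$ with both $\mu^{(\ts_1)}$ and $\mu^{(\ts_2)}$ decoding deterministically to $(\is,+)$; then the codec fixes $\rket{(\is\is)_+}$ exactly, yet no encoding label is dominant, so injectivity of $e$ on $G$ cannot be obtained this way. The repair---which is in effect what the paper's proof does, working in expectation rather than through a good set---is to extract the representative label from the \emph{decoder}. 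Writing the per-component fidelity as $F(\is)=\sum_{\ts,\tau}\lambda^{(\is)}_{\ts\tau}\mu^{(\ts)}_{\is\tau}$, so that $\normop{(\tC-\tI)\boxtimes\tI\,\rket{(\is\is)_+}}=2(1-F(\is))$ (the sign bookkeeping forces the diagonal $\tau'=\tau$, since the output sign is the product of encoder and decoder signs), one bounds the diagonal sum by the product of marginals, $F(\is)\leq\sum_{\ts}\lambda^{(\is)}_{\ts}\mu^{(\ts)}_{\is}\leq\max_{\ts}\mu^{(\ts)}_{\is}$, because $\sum_\ts\lambda^{(\is)}_{\ts}=1$. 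Hence $\is\in G$ forces $\mu^{(t(\is))}_{\is}>1-\sqrt{\epsilon}/2$ for the label $t(\is)$ maximizing $\mu^{(\ts)}_{\is}$ over $\ts$, and $t$ is injective on $G$ because a single probability distribution $\mu^{(\ts)}$ cannot assign mass above $\tfrac{1}{2}$ to two distinct strings. With this substitution your bound $\abs{G}\leq 2^{2M-1}$ holds, and the remainder of your converse (Markov estimate, converse equipartition, then $\epsilon,\delta\to 0$) goes through and recovers the paper's conclusion; the paper's own inequality chain $\sum_{\is}p_{\is}\sum_{\ts}\lambda^{(\is)}_{\ts}\mu^{(\ts)}_{\is}\leq\sum_{\ts}p_{\is(\ts)}$ is precisely this decoder-side counting.
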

A first corollary of formula \eqref{eq:BCTinfocontent} is that the information content is additive in BCT. Indeed, if $\sum_ip_i\rket{i}_{\rA}=\rket{\rho}_{\rA}\in\st(\rA)$ and 
$\sum_jq_j\rket{j}_{\rB}=\rket{\sigma}_{\rB}\in\st(\rB)$ then the factorized state is given by 
\[
\rket{\rho}_{\rA}\boxtimes\rket{\sigma}_{\rB}=\sum_{i,j,s}\frac{1}{2}p_i q_j \rket{(ij)_s}_{\rA\rB},
\]
a straightforward application of equation \eqref{eq:BCTinfocontent} then gives 

\[
\begin{aligned}
I(\rho\boxtimes\sigma)& =\frac{[H(\p)+H(\q)+1]+1}{2} \\
	&= \frac{H(\p)+1}{2}+\frac{H(\q)+1}{2} \\
	&= I(\rho)+I(\sigma).
\end{aligned}
\]
We then deduce that atomicity of parallel composition of states is not a necessary condition for the additivity of information content when factorized states are considered. In particular, not even local discriminability is a necessary condition for additivity of $I(\rho)$. The latter fact, however, was already known from 
\cite{Perinotti2023}, where it is proven that the information content of a fermionic source is given by the von Neumann entropy of the state representing the source, just as in the case of quantum theory, and fermionic theory is a bi-local theory (see \cite{doi:10.1142/S0217751X14300257,2014fermionic}).

Another interesting feature of the information content in this theory is that it is strictly positive for all states 
of any system. In particular, the Shannon entropy of any sharp probability distribution is vanishing, whence, for any 
system $\rA$, it holds that
\begin{equation}
I(\phi)=\frac{1}{2}, \qquad \forall \phi\in\purst(\rA).
\end{equation} 
This is in contrast with what we know from classical, quantum and fermionic theory, where the information content is vanishing if and only if the state
is pure. In this respect, one is led to give the notion of purity an operational meaning by saying that we have maximal knowledge about a physical system
whenever it is in a pure state. In \cite{PhysRevA.105.052222} (proposition V.2) It has been shown that atomicity of parallel composition of states (definition \ref{prop:atom_paralcomp_states}) and the uniqueness of purifications up to reversible transofrmations on the ancillary system, are sufficient conditions for this interpretation. Notably, a converse is also true \cite{PhysRevA.105.052222} (proposition V.2), namely that if $I(\phi)=0$ whenever the state $\phi$ is pure then state purity is preserved under parallel
composition. Here we explicitly see that in a theory that violates atomicity of parallel composition of states also pure states can have non vanishing information content.

In reference \cite{Short:2010kt}, the authors address the problem of noiseless coding within the generalised probabilistic theories framework. Under certain assumptions, they prove (see in particular theorem VIII.I in \cite{Short:2010kt}) that the measurement entropy $S_1(\rho)$ is a lower bound for the rate of the compression task they define in section VIII. At first glance, this result seems to be in contradiction with theorem \ref{thm:main_thm}. However,
we remark that two assumptions are made in order to prove such a result, that are not satisfied by the BCT. The first one is that the composition of fine-grained measurements yields a fine-grained measurement, where the latter is defined in \cite{Short:2010kt} as a measurement such that all its effects are atomic. BCT does not satisfy this property, because of the violation of the atomicity of parallel composition of states, which also entails that the composition of two fine-grained effects is no longer fine-grained. The second assumption is that, if the dimension of the state space of system $\rA$ is $d$, then the dimension of the state space of the composite system $\rA^{\boxtimes N}$ is $d^N$ (here, with the word \emph{dimension}, we are referring to the notion introduced in \cite{Short:2010kt}, that is the maximum number of elements of a fine-grained observaiton test). In the case of BCT, for a system $\rA$ with size $D_{\rA}$ (recall that the \emph{size} is the dimension of the linear space $\st_{\R}(\rA)$), the dimension is $d_{\rA}=D_\rA$, so that for $\rA^{\boxtimes N}$ we find $d_{\rA^{\boxtimes N}}=2^{N-1}D_{\rA}^N>D_{\rA}^N$ for $N>1$, whence also this assumption is violated in our case. Therefore, our result is not in contradiction with theorem VIII.I presented in reference \cite{Short:2010kt}, since BCT is outside the domain of validity of that theorem. Nevertheless, this result has its relevance for theories in which parallel composition preserves atomicity, in particular in the case of PR-boxes, which represents a very interesting case to investigate.

\section{Comparing the information content with entropies}
Entropic-like quantities have been introduced in the context of Generalized Probabilistic Theories in terms of the Shannon entropy function \cite{Barnum:2010gy,Short:2010kt,KIMURA2010175}. 
We report just below their definitions for the sake of completeness.
\begin{definition}\label{def:GPTentropies}
Let $\rho\in\st_1(\rA)$ for some system $\rA$, denote  by $\cO^{\rm at}(\rA)$ the set of atomic observation tests of
$\rA$ (i.e.~observation tests whose elements are atomic effects).
We then define 
\begin{align}
S_1(\rho)&\coloneqq\inf_{\{a_j\}\in\cO^{\rm at}(\rA)}H(J) \\
S_2(\rho)&\coloneqq\sup_{(p_i,\{\phi_i\})\in\cP(\rho)}\sup_{\{a_j\}\in\cO^{\rm at}(\rA)}H(I:J)\\
S_3(\rho)&\coloneqq\inf_{(p_i,\{\phi_i\})\in\cP(\rho)}H(I)
\end{align}
where in the Shannon entropies and the mutual information on the r.h.s.~of the above equations refer to the random variables $I,J$ distributed 
according to the joint probability distribution $q(i,j)\coloneqq p_i(a_j|\phi_i)_{\rA}$.
For any $i=1,2,3$ we introduce the regularized version of $S_i$ as follows:
\[
S^{\rm reg}_i(\rho)\coloneqq\limsup_{N\rightarrow \infty}\frac{S_i(\rho^{\boxtimes^N})}{N}.
\]
\end{definition} 

Elementary properties such as concavity, strong concavity and subadditivity have been studied for these quantities; moreover, they have been proven to be not equivalent. As we have seen in the foregoing section, the relation between the information content of a state and the Shannon entropy of the associated probability distribution is not trivial, as a consequence of the violation of atomicity of parallel composition of states. The latter property has also a remarkable consequence on the
behaviour of the regularized entropies with respect to their single-system counterparts, which is stated in the proposition below.

\begin{proposition}\label{prop:regentr_noatomicity}
In any classical theory, for any system $\rA$ and $\sum_ip_i\rket{i}_{\rA}=\rket{\rho}_{\rA}\in\st(\rA)$, one has $S_i(\rho)=H(\p)$. Moreover, whenever atomicity of parallel composition of states (definition \ref{prop:atom_paralcomp_states}) is violated, there exists a state $\Sigma\in\st(\rC)$ for some system $\rC$
such that the strict inequality $S_i^{\rm reg}(\Sigma)>S_i(\Sigma)$ holds for any $i=1,2,3$.
\end{proposition}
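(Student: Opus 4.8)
The plan is to reduce all three entropies, on every system, to a single combinatorial quantity---the Shannon entropy of the \emph{unique} pure decomposition guaranteed by the simplex structure---and then to exploit super-additivity of that quantity to open a gap after regularization. For the single-system identity I would use that a classical theory is simplicial, so the decomposition $\sum_i p_i\rket{i}_\rA$ is the only element of $\cP(\rho)$; this gives $S_3(\rho)=H(\p)$ at once. For $S_1$ and $S_2$ I would first note that, in a simplex, the atomic effects are exactly those proportional to the vertex indicators $\rbra{e_i}$ (with $\pairing{e_i|j}=\delta_{ij}$): an effect $\sum_i a_i\rbra{e_i}$ with two nonzero coefficients admits the nontrivial refinement obtained by peeling off one $\rbra{e_i}$, so atomicity forces a single nonzero coefficient. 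Hence any atomic observation test is a refinement of the perfect-discrimination test, $a_j=\lambda_j\rbra{e_{k_j}}$ with $\sum_{j:k_j=i}\lambda_j=1$, and the induced law $q(j)=\lambda_j p_{k_j}$ is a refinement of $\p$. Since refining a distribution cannot decrease entropy, $H(J)\ge H(\p)$ with equality at the sharp test, so $S_1(\rho)=H(\p)$; and since the decomposition is unique, $H(I{:}J)\le H(I)=H(\p)$, saturated when $J$ is a deterministic function of $I$, whence $S_2(\rho)=H(\p)$.

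The second part rests on a super-additivity formula for this decomposition entropy. Writing $\rho=\sum_a p_a\phi_a$, $\sigma=\sum_b q_b\psi_b$ and decomposing each $\phi_a\boxtimes\psi_b=\sum_c r^{ab}_c\,\omega^{ab}_c$ into vertices, I would argue the vertices are pairwise distinct across different blocks $(a,b)$: applying a local discriminating effect on $\rA$ (and the deterministic effect on $\rB$) to $\phi_a\boxtimes\psi_b$ gives $\delta_{a,\cdot}$, and positivity then forces $\pairing{e_{a'}\boxtimes e\,|\,\omega^{ab}_c}=\delta_{aa'}$ for every appearing vertex; the symmetric statement on $\rB$ shows blocks cannot overlap. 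Thus the unique decomposition of $\rho\boxtimes\sigma$ has weights $\{p_aq_b\,r^{ab}_c\}$ and
\[
S_i(\rho\boxtimes\sigma)=S_i(\rho)+S_i(\sigma)+\sum_{a,b}p_aq_b\,H(\mathbf{r}^{ab})\ \ge\ S_i(\rho)+S_i(\sigma),
\]
the excess being the mean decomposition entropy of the atom-products $\phi_a\boxtimes\psi_b$. Iterating gives $S_i(\Sigma^{\boxtimes N})\ge N\,S_i(\Sigma)$ for any state, and grouping $2L$ copies into pairs sharpens this to $S_i(\chi^{\boxtimes 2L})\ge L\,S_i(\chi^{\boxtimes 2})$. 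Hence, as soon as one exhibits a pure state $\chi$ with $\chi\boxtimes\chi$ non-atomic, $S_i^{\rm reg}(\chi)\ge\tfrac12 S_i(\chi^{\boxtimes 2})>0=S_i(\chi)$, which is the claimed strict inequality for $\Sigma=\chi$.

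The hard part is producing this seed from the bare failure of atomicity of parallel composition. The hypothesis only yields one non-atomic product $\phi\boxtimes\psi$ of atoms in possibly distinct systems, and the naive choice $\Sigma=\phi\boxtimes\psi$ is insufficient: if every product of the entangled vertices $\chi_k$ of $\phi\boxtimes\psi$ stayed atomic, the excess above would vanish and one would get $S_i^{\rm reg}(\Sigma)=S_i(\Sigma)$. I would therefore pass to these entangled vertices and show that non-atomicity cannot be confined---if all self- and cross-products among the $\chi_k$ were atomic, the sub-simplex they span would compose with itself in a purity-preserving way, and an associativity/consistency analysis of $\phi\boxtimes\psi\boxtimes\phi\boxtimes\psi$ would force $\phi\boxtimes\psi$ itself to be atomic, a contradiction. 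Making this propagation rigorous---equivalently, that a violation of Property~\ref{prop:atom_paralcomp_states} always entails a pure state with non-atomic tensor square---is the principal obstacle; once it is secured, the super-additivity bound closes the argument for all three $S_i$ at once, since the first part makes them coincide on every system, including the composites $\Sigma^{\boxtimes N}$.
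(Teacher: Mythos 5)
Your first part is essentially the paper's own argument and is fine: uniqueness of the pure decomposition in a simplex settles $S_3$, and the characterization of atomic effects as multiples of the discriminating effects (the paper gets this from states being separating for effects, following \cite{KIMURA2010175}) settles $S_1$ and $S_2$. Your superadditivity formula is also correct, and your disjoint-blocks argument via the product effects $\rbra{e_{a'}}\boxtimes\rbra{e}$ actually makes explicit a step the paper uses silently when it asserts that the subsequence $S_i(\Sigma^{\boxtimes 2^k})/2^k$ is increasing. But note that your formula $S_i(\rho\boxtimes\sigma)=S_i(\rho)+S_i(\sigma)+\sum_{a,b}p_aq_b H(\mathbf{r}^{ab})$ already shows you do not need the ``pure seed with non-atomic tensor square'' at all: taking $\Sigma=\phi\boxtimes\psi=\sum_k p_k\rket{\chi_k}$ (the mixed product supplied by the hypothesis), it suffices that \emph{some} cross-product $\rket{\chi_k}\boxtimes\rket{\chi_{k'}}$ be mixed to get $S_i(\Sigma^{\boxtimes 2})>2S_i(\Sigma)$, and then your doubling bound gives $S_i^{\rm reg}(\Sigma)\geq S_i(\Sigma^{\boxtimes 2})/2>S_i(\Sigma)$, which is exactly the paper's conclusion.

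The genuine gap is precisely the step you flag and leave open: you never prove that non-atomicity propagates from $\phi\boxtimes\psi$ to products of its vertices, and the sketched ``associativity/consistency analysis'' is not an argument. The paper closes it with a short conditioning argument that you are missing. Since $\sum_k p_k(\rbra{i}\boxtimes\tI_{\rB})\rket{\chi_k}=(i|\phi)\rket{\psi}=\rket{\psi}$ is a vertex and each summand is a nonnegative combination of vertices, every conditional state satisfies $(\rbra{i}\boxtimes\tI_{\rB})\rket{\chi_k}=c_k\rket{\psi}$, and symmetrically $(\tI_{\rA}\boxtimes\rbra{j})\rket{\chi_{k'}}=d_{k'}\rket{\phi}$; choosing $k,k'$ with $c_k,d_{k'}>0$ (possible since $\sum_k p_kc_k=\sum_{k'}p_{k'}d_{k'}=1$), conditioning $\rket{\chi_k}\boxtimes\rket{\chi_{k'}}$ on $\rbra{i}$ (first $\rA$ factor) and $\rbra{j}$ (second $\rB$ factor) yields $c_kd_{k'}\,\rket{\psi}\boxtimes\rket{\phi}$, which is mixed because $\phi\boxtimes\psi$ is and the swap is reversible; but if $\rket{\chi_k}\boxtimes\rket{\chi_{k'}}$ were pure, lemma 1 of \cite{PhysRevA.101.042118} would force this conditional state to be pure --- contradiction. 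One further caution: your proposed reformulation (``a violation of atomicity always entails a pure state with non-atomic tensor \emph{square}'') is stronger than what this argument delivers and than what is needed; the conditioning only guarantees mixedness of suitable cross-products $\rket{\chi_k}\boxtimes\rket{\chi_{k'}}$, and iterating your own reassociation idea keeps producing cross-products without ever forcing a mixed self-square. So the correct repair is not to hunt for the seed $\chi$ but to run your superadditivity bound directly on the mixed state $\Sigma=\phi\boxtimes\psi$, as the paper does.
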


As a corollary of the above proposition, we can also notice that, for classical theories violating 
property~\ref{prop:atom_paralcomp_states}, each entropy, in addition to being superadditive, also violates additivity 
when factorized states are considered. Indeed, if there exist $\rket{i}_{\rA}\in\purst(\rA)$ and 
$\rket{j}_{\rB}\in\purst(\rB)$ such that $\rket{i}\boxtimes\rket{j}$ is mixed, we immediately see that 
\[
S_i(\rket{i}\boxtimes\rket{j})>0=S_i(\rket{i})+S_i(\rket{j}).
\]
where $S_i$ is given by the Shannon entropy of the respective decompositions, according to proposition \ref{prop:regentr_noatomicity}.

While in classical and quantum theory all the $S_i$'s and their regularized version collapse to the Shannon and von 
Neumann entropies, respectively, thus boiling down to the same operational interpretation given by the noiseless 
coding theorems, much less is known about their operational meaning in a general theory. In BCT the regularized 
entropies are related to the Shannon entropy of the state according to the following proposition.
\begin{proposition}\label{prop:entropiesBCT}
Let $\sum_ip_i|i)_{\rA}=|\rho)_{\rA}\in\st_1(\rA)$, then $S_i^{\rm reg}(\rho)=S_i(\rho)+1=H(\p)+1$.
\end{proposition}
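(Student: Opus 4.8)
The plan is to reduce the entire statement to a single elementary entropy computation, using the fact that the message state has a \emph{unique} decomposition into the vertices of a simplex. The equality $S_i(\rho)=H(\p)$ is just the single-system content already established in Proposition~\ref{prop:regentr_noatomicity}, so the real work lies in evaluating the regularized quantities $S_i^{\rm reg}(\rho)=\limsup_{N\to\infty}S_i(\rho^{\boxtimes N})/N$.

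First I would apply Proposition~\ref{prop:regentr_noatomicity} not only to $\rho$ but to the composite state $\rho^{\boxtimes N}$, which is itself a legitimate state of the classical theory BCT on the system $\rA^{N}$. That proposition guarantees that, for any system of a classical theory, all three entropies $S_i$ collapse to the Shannon entropy of the (unique, since $\st(\rA^N)$ is a simplex) decomposition of the state into pure states. Hence the task becomes: identify the probability distribution appearing in the pure-state decomposition of $\rho^{\boxtimes N}$ and compute its Shannon entropy.

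The second step is to read this distribution off from Eq.~\eqref{eq:message}. The expression $|\rho^{\boxtimes N})=\sum_{\bi,\s}p_{\bi}\,2^{-(N-1)}\,|\bi_{\s})$ is exactly the sought simplicial decomposition, because the pure states $|\bi_{\s})$, indexed by a string $\bi$ of $N$ local indices and a string $\s$ of $N-1$ signs, number $D_{\rA}^{N}\cdot 2^{N-1}=D_{\rA^{N}}$ by Postulate~\ref{pos:parallel}, and therefore constitute precisely the vertices of the simplex. The associated weight $p_{\bi}\,2^{-(N-1)}$ factorizes into the i.i.d.\ distribution $p_{\bi}=\prod_{k}p_{i_k}$ over local indices and the uniform distribution over the $2^{N-1}$ sign strings, and these two factors are independent. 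The entropy of a product distribution being additive, this yields $S_i(\rho^{\boxtimes N})=N\,H(\p)+(N-1)$.

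Dividing by $N$ and letting $N\to\infty$ then gives $S_i^{\rm reg}(\rho)=H(\p)+1$, which, combined with $S_i(\rho)=H(\p)$, is the claimed chain of equalities. I do not expect a genuine obstacle here: the only point requiring care is justifying that \eqref{eq:message} is the unique simplicial decomposition, so that Proposition~\ref{prop:regentr_noatomicity} applies verbatim to $\rho^{\boxtimes N}$; this is secured by matching the number of states $|\bi_{\s})$ against the dimension $D_{\rA^{N}}$ dictated by Postulate~\ref{pos:parallel}, which forces the $|\bi_{\s})$ to be distinct vertices. Everything else is the observation that the $N-1$ uniformly distributed signs contribute exactly one extra bit per copy in the regularized limit.
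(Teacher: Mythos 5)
Your proposal is correct and takes essentially the same route as the paper's own proof: both identify $S_i(\rho^{\boxtimes N})$ with the Shannon entropy of the unique simplicial decomposition of Eq.~\eqref{eq:message}, whose weights $p_{\bi}\,2^{-(N-1)}$ factorize into the i.i.d.\ distribution over local indices and the uniform distribution over sign strings, giving $S_i(\rho^{\boxtimes N})=NH(\p)+(N-1)$ and hence $S_i^{\rm reg}(\rho)=H(\p)+1$ in the limit. The only (harmless) difference is that you spell out the dimension count $2^{N-1}D_{\rA}^{N}=D_{\rA^{N}}$ to certify that Eq.~\eqref{eq:message} is indeed the pure-state decomposition, a point the paper leaves implicit since Postulate~\ref{pos:parallel} lists the pure states of composites explicitly.
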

A remarkable corollary of this proposition is that, in general, none of the $S_i$'s nor the $S_i^{\rm reg}$'s can be understood as the minimal compression rate.
Notice that, while the additivity property for factorized states is violated by all the entropies, it is satisfied by the regularized versions. The result of proposition 
\ref{prop:entropiesBCT} is by far intuitive if we think of the particular compositional rule on states that BCT satisfies. Indeed, at the level of single systems, there is no difference
with respect to standard classical theory, and this is true for any classical theory that does not satisfy atomicity of parallel composition of states. The effect of this violation shows up when we consider $N$ copies of the same state, the latter operation giving an extra flat bit, one for each additional copy of the original state, and the appearence of this extra bit is captured by the regularized entropies. The factor 2 in the information content can also be intuitively expected, since also when we compose bibits we get additional space that can be used to allocate the message. Therefore, the departure of $I(\rho)$ from $S^{\rm reg}_i(\rho)$ can be essentialy ascribed to the weird compositional rule for systems.

We notice that the results of theorem \ref{thm:main_thm} and proposition \ref{prop:entropiesBCT} are consistent with the following general bound
from \cite{PhysRevA.105.052222} (in particular, see the proof of lemma V.2 and the result of lemma V.1),

\[
I(\rho)\geq \frac{S_2(\rho)}{\log_2 D},
\]
where $D$ is a costant such that $D_{\rB^{\boxtimes M}}\leq k D^M$ for some $k$. Indeed, in the case of BCT $D_{\rB^{\boxtimes M}}=\frac{1}{2}4^M$, whence
$D=4$. Actually, in the present case such a bound is saturated with $S_2(\rho)$ replaced by $S_2^{\rm reg}(\rho)$.

\section{Conclusion}
We have presented a full computation of the information content in a bilocal classical theory. The result is given in 
terms of the Shannon entropy of the probability distribution defining the state, namely, if 
$\rket{\rho}=\sum_i p_i \rket{i}$, it turns out that $I(\rho)=(H(\p)+1)/{2}$. The definition of $I(\rho)$ is 
hardly amenable to direct calculation in general theories. However, in the special case of BCT, the calculation is simplified by the fact that, for any state, there exists a 
``mother'' dilation from which we can obtain all the other ones by applying a suitable channel on the ancilla. With 
respect to standard classical theory, the information content shows two differences that can be both ascribed to the 
violation of atomicity of parallel composition of states in BCT: i) there is an overhead given by +1 in the numerator. 
This is due to the appearance of a bit each time that we compose in parallel a new copy  of the same state. Since each 
bit is uniformly distributed we end up with the maximum overhead, that is indeed +1; ii) there is a factor 2 in the 
denominator. This follows from the fact that, when we compose bibits into registers, their dimension is given by 
the formula $D_{\rB^{\boxtimes M}}=2^{M-1}D_{\rB}^M$, thus the room for allocating messages per single bibit is almost ``double'' with respect to 
the size of the register. Notice that the factor 2 in the denominator is
then related to meaning of information content in the specific theory at hand, where the elementary systems for physical encoding are bibits. If we had to evaluate the {\em classical} information content, i.e.~the ability of the source to encode classical information, then the regularized mutual information would be the right quantifier.

We have already noticed that the information content in BCT is additive on factorized states---i.e.~states of 
the form $\rho\boxtimes\sigma$---and this means that atomicity of parallel composition is not a necessary condition
for additivity. A question that remains open is under which hypotheses, given two states 
$\rho,\sigma\in\st_1(\rA)$, one is able to prove that $I(\rho\boxtimes\sigma)=I(\rho)+I(\sigma)$. It might be also the 
case that additivity is a feature of $I(\rho)$ that holds in full generality, as it would be desirable for a measure of the information content. The results of the present work seem to suggest 
that atomicity of parallel composition plays a marginal role for this property.
%Secondly, by the result of theorem \ref{thm:main_thm}, we argue that having a physical system in a pure state does not always correspond to a complete
%knowledge of the source. On the one hand, this is true for classical  and quantum theory, where the information content corresponds to the Shannon and von
%Neumann entropy respectively, and these are vanishing on all and only the pure states. This is also clear from the fact that in such theories (and in fermionic theory as well) atomicity of parallel composition of states and  essential uniqueness of purification (up to reversible transformation on the purifying system) hold,
%and these are sufficient conditions for having $I(\phi)=0$ whenever $\phi$ is pure.  On the other hand, we clearly see that in BCT \emph{all} states have a 
%strictly positive information content, in particular the pure ones have information content equal to $\frac{1}{2}$. Therefore, BCT serves as an explicit example of
%an OPT where pure states have non-vanishnig information content if property \ref{prop:atom_paralcomp_states} is not satisfied. 

Along with the information content we have also analyzed the behaviour of three different entropic functions that have 
been considered in the literature. At the level of single system there is no difference with respect to standard 
classical theory, and they all coincide with the Shannon entropy of the state $\rho$. As a consequence, any classical 
theory is monoentropic~\cite{Barnum:2010gy}. The regularized entropies are clearly sensible to the extra bit that 
arises when systems are composed, and they all turn out to be equal to $H(\p)+1$. This result then establishes the 
existence of theories of information where none of the proposed generalizations of entropy can be interpreted as the 
information content of the source, and neither their regularized versions do. Remarkably, this is true in a thory that 
is monoentropic. The departure of the regularized version from the single-system counterpart is not peculiar of the 
BCT, but it actually takes place in any classical theory (in the sense of the definition \ref{def:classicaltheories}) 
whenever atomicity of parallel composition of states is violated, as proved in 
proposition~\ref{prop:regentr_noatomicity}.

As we have already noticed at the end of the last section, in the case of BCT there is a relation between the 
regularized entropy $S_2^{\rm reg}(\rho)$ and the information content of the following form
\begin{equation}\label{eq:conjecture}
I(\rho)=\frac{S_2^{\rm reg}(\rho)}{\log_2 D},
\end{equation}
where $D$ is a constant such that the relation $D_{\rB^{\boxtimes M}}\leq kD^M$ holds for some $k$. In the case of BCT $\rB$ is the bibit, and this relation is saturated with $k=\frac{1}{2}$ and $D=4$, whence the equation above trivially holds. One might be tempted to conjecture that a result of this form holds for any classical
theory, but it is not difficult to realize that this is not the case. Let us consider a classical theory 
with only one type of system, say the bit (whose size is 2), satisfying local
discriminability and, consequently, atomicity of parallel composition. Now, restrict the allowed tests
of the theory to be preparation tests, observation tests, and all possible permutations of the bits when more bits are composed in parallel. It is easy to realize that there are
no protocols that allow one to compress a source represented by a mixed state of a single bit, so that $I(\rho)=1$. But proposition \ref{prop:regentr_noatomicity}
implies that $S_i(\rho)=h(p)$ where $h(p)$ is the binary entropy of the bit state $\rho$, and so is for $S_i^{\rm reg}(\rho)$ by local discriminability,
whence the conjecture is false. 

The main lesson that we learn from the results presented so far is that a treatment of the notion of information 
content, from a foundational point of view, cannot ignore the compositional structure of a physical theory, as we have 
seen that the latter heavily marks its behaviour also in a classical theory. Moreover, as we have just argued, also 
the allowed transformations play a significant role, as they might severely 
restrict the freedom of compressing. 

We leave open for further studies what happens if one considers non-local classical theories with a no-restriction 
hypothesis on the allowed transformations. It might be the case that a relation very similar in form to 
equation~\eqref{eq:conjecture} holds. 
Aiming at doing a first step beyond the the non-local classical scenario, one could ask how the present formalism can be applied to other known non-local theories, in order to investigate possible deviations of the information content from the expected behaviour in the absence of local discriminability. Recently, the authors studied information compression in fermionic theory \cite{Perinotti2023}, which is a strictly bilocal theory, showing that the information content equals the von Neumann entropy of the fermionic state, and a similar analysis could be carried out for real quantum theory, which also violates the local discriminability. Another model that could highlight relevant features of the information content is that of Ref. \cite{Barnum2020composites}, where finite-dimensional real, complex and quaternionic quantum theory have been unified within a single category, where complex quantum systems compose in a non-trivial way.
Another interesting question is what happens in the case of the Popescu-Rohrlich 
boxes~\cite{Popescu1994}. It is possible that one of the entropies is equal to the information content. On the one hand, we 
already know that $I(\rho)\geq S_2(\rho)$, therefore the missing part is achievability, namely the direct part of a 
noiseless coding theorem, which would establish $I(\rho)=S_2(\rho)$. The question in the case of PR-boxes is particularly relevant in light of the fact that the three entropies are known to be inequivalent in such a context.

Answering fundamental questions about information content and entropic functions is a first step toward the formulation of area laws and holographic principle beyond the standard quantum scenario. Both area laws and the holographic principle rely on the notion of entropy, which in turn is operationally interpreted as a quantifier of uncertainty (or information content). Understanding to what extent those laws can be generalized, independently of the nature of the systems involved, may shed new light on the relation between microscopic and large-scale physical phenomena in terms of localization and flow of information \cite{doi:10.1063/1.531249,RevModPhys.74.825,eisert2010colloquium}.

\section*{Aknowledgments}
P.P. and L.V. acknowledge financial support from European Union - Next Generation EU through the PNRR MUR Project No. PE0000023. A.T. acknowledges the financial support of Elvia and Federico Faggin Foundation (Silicon Valley Community Foundation Project ID No. 2020- 214365).

\bibliographystyle{unsrturl}
\bibliography{biblio-ic}
 
\appendix

\section{Proofs of the main results}\label{app:proofs}

\begin{lemma}\label{lem:BCTdigitizability}
BCT is a digitizable theory.
\end{lemma}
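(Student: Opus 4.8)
The plan is to exhibit the bibit $\rB$ (the unique system with $D_\rB=2$) as an obit and then verify the asymptotic-equivalence clause of Definition~\ref{def:digitizability}. The whole construction rests on two facts: every system of BCT is classical in the sense of Definition~\ref{def:classicaltheories}, so its state space is a simplex whose vertices are the pure states $\rket{i}$; and, by iterating Postulate~\ref{pos:parallel}, the composite $\rB^{\boxtimes k}$ is the unique system of size $D_{\rB^{\boxtimes k}}=2^{k-1}D_\rB^{k}=2^{2k-1}$. Hence for an arbitrary system $\rX$ of size $D_\rX$ it suffices to choose any $k<\infty$ with $2^{2k-1}\geq D_\rX$, i.e.\ $k\geq(1+\log_2 D_\rX)/2$, so that the simplex of $\rX$ embeds into the larger simplex of $\rB^{\boxtimes k}$.

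First I would build the encoding--decoding pair at the level of local indices, invoking the channel characterisation (Property [Channels]). The encoding $\tC\in\tr_1(\rX\to\rB^{\boxtimes k})$ is taken to be the deterministic index injection specified by $\lambda^{(i)}_{m,\tau}=\delta_{mi}\delta_{\tau,+}$ for $i\in\{1,\dots,D_\rX\}$, which is a legitimate channel precisely because the property admits any choice of probability distributions $\{\lambda^{(i)}_{m,\tau}\}$. The decoding $\tF\in\tr_1(\rB^{\boxtimes k}\to\rX)$ is the retraction that sends each index $m\leq D_\rX$ back to $m$ with sign $+$ and collapses every $m>D_\rX$ to, say, the index $1$ with sign $+$; both maps are single-outcome, hence deterministic. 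Composing them, the input index $i$ is mapped to $i$ and the sign is multiplied by $+\cdot+=+$, so on every pure bipartite state $\rket{(ji)_s}_{\rC\rX}$ one recovers $\rket{(ji)_s}_{\rC\rX}$, which is exactly the action of $\tI_\rX$ in the channel characterisation. Since BCT is bilocally discriminable (Definition~\ref{def:nlocal_discr} with $n=2$), agreement of the two channels' $\lambda$-data, equivalently their agreement on all such bipartite pure states, suffices to conclude $\tF\circ\tC=\tI_\rX$. This establishes that the bibit is an obit.

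It remains to check that any two obits $\rB_1,\rB_2$ are asymptotically equivalent in the sense of Definition~\ref{def:asympeq}; since every obit has size $\geq 2$, it is enough to prove this for any two systems $\rA_1,\rA_2$ of size $\geq 2$. The same index-injection argument shows that an encoding with a left inverse exists between $\rA_1^{\boxtimes k_1}$ and $\rA_2^{\boxtimes k_2}$ whenever the source fits into the target, i.e.\ whenever $D_{\rA_1^{\boxtimes k_1}}\leq D_{\rA_2^{\boxtimes k_2}}$; by the size formula this is the condition $k_1(1+\log_2 D_{\rA_1})\leq k_2(1+\log_2 D_{\rA_2})$. Consequently $M_2^{\rm min}(k_1)=\lceil k_1(1+\log_2 D_{\rA_1})/(1+\log_2 D_{\rA_2})\rceil$, and symmetrically $M_1^{\rm min}(h_2)=\lceil h_2(1+\log_2 D_{\rA_2})/(1+\log_2 D_{\rA_1})\rceil$. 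Taking the limits of Definition~\ref{def:asympeq}, item~\ref{it:equiciccia}, one gets $\lim_{k_1\to\infty}M_2^{\rm min}(k_1)/k_1=(1+\log_2 D_{\rA_1})/(1+\log_2 D_{\rA_2})=:k$ and $\lim_{h_2\to\infty}M_1^{\rm min}(h_2)/h_2=(1+\log_2 D_{\rA_2})/(1+\log_2 D_{\rA_1})=k^{-1}$, which are reciprocal as required; items~\ref{it:kappa1} and~\ref{it:hacca2} hold because finite admissible $(k_1,k_2)$ and $(h_1,h_2)$ evidently exist.

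I expect the only genuinely delicate step to be the passage from the index-level construction to the statement $\tF\circ\tC=\tI_\rX$ as an identity of \emph{transformations} rather than merely of single-system state maps: one must track the sign degree of freedom through the composition and appeal to the channel characterisation together with bilocal discriminability to rule out any discrepancy on entangled bipartite inputs. The size bookkeeping and the ceiling-function limits in the asymptotic-equivalence part are routine by comparison.
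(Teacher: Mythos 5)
Your proposal is correct and follows essentially the same route as the paper's proof: an index-injection encoding with sign fixed to $+$, a retraction decoding, verification of $\tF\circ\tC=\tI_\rX$ on all bipartite pure states via the channel characterisation, and the ceiling-function computation of $M_2^{\rm min}(k_1)$ and $M_1^{\rm min}(h_2)$ for asymptotic equivalence (your ratio $(1+\log_2 D_{\rA_1})/(1+\log_2 D_{\rA_2})$ is identical to the paper's $\log_{2D_{\rA_2}}2D_{\rA_1}$). The only cosmetic difference is that the paper shows \emph{any} system can serve as obit while you single out the bibit, but since you establish asymptotic equivalence for arbitrary pairs of systems, the obit-uniqueness clause of Definition~\ref{def:digitizability} is covered just as in the paper.
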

\begin{proof}[\bf Proof]
We show that any system $\rB$ of any size $D_{\rB}$ can serve as o-bit for BCT. Indeed, let $\rA$ be any other system of the theory, denote by $D_{\rA}$ its size and set
\begin{equation}
k=\left\lceil\log_{2D_{\rB}}2D_{\rA}\right\rceil. \label{eq:leastinteger}
\end{equation}
Then, let $\hf:\{1,\cdots,D_{\rA}\}\rightarrow\{1,\cdots,D_{\rB}\}^{k}\times\{+,-\}^{k-1}$ be an injiective function.
To denote the pair $({\bf j,\bf s})\in\{1,\cdots,D_{\rB}\}^{k}\times\{+,-\}^{k-1}$ we use the shortened notation $\js$.
The action of the encoding 
$\tE$ and the decoding $\tD$ are defined by two set of probability distributions, $\lambda^{(i)}_{\js,\tau}$ and $\mu^{(\js)}_{i,\tau}$ respectively. We 
then set $\lambda_{\hf(i),+}^{(i)}=1$ $\forall i \in \{1,\cdots,D_{\rA}\}$ for the encoding. For the decoding, $\forall \js$ such that $\exists!i$ satisfying $\hf(i)=\js$ we define $\mu^{(\js)}_{i,+}=1$, while for every other $\js$ we can freely choose any probability distribution at our wish. It is now easy to realize that, for any $i,j$ and $\rE$, the following holds

\begin{align}\label{eq:digiteq}
\begin{aligned}
\Qcircuit @C=1.2em @R=1em @! R { \multiprepareC{1}{(ij)_s} & \ustick {\rA}  \qw  & \gate {\tD\tE} & \ustick {\rA}  \qw  &\qw \\
\pureghost {(ij)_s} & \qw  &\ustick  {\rE} \qw &\qw &\qw }
\end{aligned}
\ = \
\begin{aligned}
\Qcircuit @C=1.8em @R=1em @! R { \multiprepareC{1}{(ij)_s} & \ustick {\rA} \qw & \qw \\
\pureghost {(ij)_s} & \ustick {\rE} \qw & \qw}
\end{aligned}\ .
\end{align}

Now consider two systems of BCT, say $\rB_1$ and $\rB_2$. For any integer number $k_1$ of systems $\rB_1$, the minimal number $M^{\rm min}_2(k_1)$ of $\rB_2$ that are needed to perform the encoding in a perfectly recoverable way as in equation \eqref{eq:digiteq} (with $\rA=\rB_1^{\boxtimes k_1}$) is given by formula \eqref{eq:leastinteger}
\[
M_2^{\rm min}(k_1)=\left\lceil k_1\log_{2D_{\rB_2}}2D_{\rB_1}\right\rceil,
\] 
and similarly
\[
M_1^{\rm min}(k_2)=\left\lceil k_2 \log_{2D_{\rB_1}}2D_{\rB_2}\right\rceil.
\] 
Therefore
\begin{align*}
\lim_{k_1\rightarrow\infty}\frac{M_2^{\rm min}(k_1)}{k_1}&=\log_{2D_{\rB_2}}2D_{\rB_1}\\
&=\left(\lim_{k_2\rightarrow\infty}\frac{M_1^{\rm min}(k_2)}{k_2}\right)^{-1},
\end{align*}
as required.
\end{proof}

\begin{proof}[\bf Proof of proposition \ref{prop:motherdilation}]
Any deterministic transformation  $\tC\in\tr_1(\rE,\rF)$ acts on the set of pure states of $\rA\rE$ as follows

\begin{align*}
\begin{aligned}
\Qcircuit @C=1em @R=.7em @! R { \multiprepareC{1}{(ik)_s} & \qw  &\ustick  \rA \qw &\qw &\qw \\
\pureghost {(ik)_s} & \ustick \rE  \qw  & \gate {\tC} & \ustick \rF  \qw  &\qw }
\end{aligned}
\ =\sum_{j,\tau}\lambda_{j\tau}^{(k)} \
\begin{aligned}
\Qcircuit @C=1em @R=.7em @! R { \multiprepareC{1}{(ij)_{\tau s}} & \ustick \rA \qw & \qw \\
\pureghost {(ij)_{\tau s}} & \ustick \rF \qw & \qw}
\end{aligned} \qquad \forall i,k,s
\end{align*}
where, for any $k$, $\lambda_{j\tau}^{(k)}$ is a probability distribution. Determining $\tC$ means to determine $\lambda^{(k)}_{j\tau}$ for every $k$. By applying the channel $\tC$ to the state $|\Pi)_{\rA\rE}$ defined in the statement, expanding and re-collecting the signs suitably, one finds

\begin{align*}
\begin{aligned}
\Qcircuit @C=1em @R=.7em @! R { \multiprepareC{1}{\Pi} & \qw  &\ustick  \rA \qw &\qw &\qw \\
\pureghost {\Pi} & \ustick \rE  \qw  & \gate {\tC} & \ustick \rF  \qw  &\qw }
\end{aligned}
\ =\sum_{i,j,s}\left(\sum_{\tau,k}p_{ik\tau}\lambda_{j(s\tau)}^{(k)}\right) \
\begin{aligned}
\Qcircuit @C=1em @R=.7em @! R { \multiprepareC{1}{(ij)_{s}} & \ustick \rA \qw & \qw \\
\pureghost {(ij)_{s}} & \ustick \rF \qw & \qw}
\end{aligned}
\end{align*}
Since the set of states is a simplex, the pure ones are affinely independent, and this entails that the condition 
$\tI_{\rA}\boxtimes\tC|\Pi)_{\rA\rE}=|\Psi_{\rA\rF})$ is satisfied iff the following equation holds for any $i,j$ and $s$

\begin{equation}
q_{ijs}=\sum_{\tau,k}p_{ik\tau}\lambda_{j(s\tau)}^{(k)}.
\end{equation}
For $p_{ik\tau}$ as in the statement
\begin{equation}
q_{ijs}=\sum_{\tau,k}p_{ik\tau}\lambda_{j(s\tau)}^{(k)}=\sum_{k}\delta_{ik}p_i\lambda_{js}^{(k)}=p_i\lambda_{js}^{(i)}.
\end{equation}
Since $\sum_{j,s}q_{ijs}=p_i$, for those indices $i$ such that $p_i=0$ the equation trivially holds whatever the choice of $\lambda^{(i)}_{js}$ is, while for those $i$'s such that $p_i>0$, $\lambda_{js}^{(i)}=\frac{q_{ijs}}{p_i}$ is an admissible solution.
\end{proof}

\begin{proof}[\bf Proof of proposition \ref{prop:fig_merit}]
Given any state $\rho\in\st_1(\rA)$, define the following dilation of $\rho^{\boxtimes N}$ (see equation \eqref{eq:message}) . 
\[
\rket{\Pi^N}\coloneqq\sum_{\sti,\s}\frac{1}{2^{N-1}}p_{\sti}\rket{(\is\is)_+}_{\rA^N\rA^N}.
\]
Moreover, for the sake of clarity let us set
\[
\begin{aligned}
E_{N,M,\varepsilon}^{\rm pur}(\rho) &= \{(\tE,\tD)|D^{\rm pur}(\rho^{\boxtimes N},\tC)<\varepsilon\}, \\
E_{N,M,\varepsilon}^{\rm dil}(\rho) &= \{(\tE,\tD)|D^{\rm dil}(\rho^{\boxtimes N},\tC)<\varepsilon\}, \\
\tilde{E}_{N,M,\varepsilon}^{\rm }(\rho) &= \{(\tE,\tD)|\tilde{D}(\rho^{\boxtimes N},\tC)<\varepsilon\}.
\end{aligned}
\]
Denote by $I^{\rm pur}(\rho)$, $I^{\rm dil}(\rho)$, $\tilde{I}(\rho)$ the corresponding quantities defined in a way analogous to \eqref{eq:infocontent}. 
The purpose is to show that $I(\rho)=\tilde{I}(\rho)$. We already know that $I(\rho)=I^{\rm pur}(\rho)=I^{\rm dil}(\rho)$. Since it is clear that 
$\tilde{D}(\rho^{\boxtimes N},\tC)\leq D^{\rm pur}(\rho^{\boxtimes N},\tC)$, it follows that $E^{\rm pur}_{N,M,\varepsilon}(\rho)\subseteq \tilde{E}_{N,M\varepsilon}(\rho)$ and thus $\tilde{I}(\rho)\leq I^{\rm pur}(\rho)=I(\rho)$. On the other hand, 
by proposition \ref{prop:motherdilation}, lemma \ref{lem:opnorm_monotonicity} (monotonicity of the operational norm with respect to channels), and the triangle inequality one also has, for any $\Psi\in D_{\rho^{\boxtimes N}}$ and $(\tE,\tD)\in \tilde{E}_{N,M,\varepsilon}(\rho)$
\[
\begin{aligned}
&\normop{\tC\boxtimes \tI_{\rE}\rket{\Psi}_{\rA^N\rE}-\rket{\Psi}_{\rA^N\rE}} = \\
= & \normop{\tI_{\rA^N}\boxtimes\tA(\tC\boxtimes \tI_{\rE}\rket{\Pi^N}_{\rA^N\rE}-\rket{\Pi^N}_{\rA^N\rE})} \\
\leq & \normop{\tC\boxtimes \tI_{\rE}\rket{\Pi^N}_{\rA^N\rE}-\rket{\Pi^N}_{\rA^N\rE}} \leq  \\
\leq & \tilde{D}(\rho^{\boxtimes N},\tC)
\end{aligned}
\]
whence $D^{\rm dil}(\rho^{\boxtimes N},\tC)\leq\tilde{D}(\rho^{\boxtimes N},\tC)$. This implies the inequality $I^{\rm dil}(\rho)\leq\tilde{I}(\rho)$. We have then proved that $I(\rho)=\tilde{I}(\rho)$.
\end{proof}

In order to relate the information content $I(\rho)$ to the Shannon entropy $H(\p)$ of the probability distribution defining the unique pure state decomposition of $\rho$, we use standard techniques from classical information theory. In particolar we remind the reader of the notion of typical
strings and of typical set, along with some properties of the latter.

Let $\p=\{p_i\}_{i\in\sX}$ be a probability distribution with $\sX$ a finite outcome set whose label are denoted by $i$. For a given $N$,  consider the string ${\bf i}\coloneqq i_1\dots i_N$ associated with the factorized probability $p_{\bi}\coloneqq p_{i_1}\cdots p_{i_N}$. For any $\delta>0$, the ($N,\delta$)\textit{-typical set} $T^N_{\delta}(\p)$ is a subset of all the possible strings of lentgh $N$ and it is defined as follows
\[
T^N_{\delta}(\p)=\left\{ {\bf i} : \left| \frac{1}{N}\log_2\frac{1}{p_{\bf i}} - H(\p) \right| \leq \delta \right\},
\] 
where $H(\p)\coloneqq-\sum_{i\in\sX}p_i\log_2p_i$ denotes the Shannon entropy of the probability distribution $\p$. Correspondingly, a string belonging to $T^N_{\delta}(\p)$ is called a \textit{typical string}. Notice that by reformulating the definition of typical set, a string is typical iff the associated probability is bounded as follows
\begin{equation}
2^{-N[H(\p)+\delta]}\leq p_\sti \leq 2^{-N[H(\p)-\delta]} \label{eq:equip}
\end{equation}
The typical set has the following relevant properties, that will be extensively used in the proof of theorem \ref{thm:main_thm}
\begin{theorem}\label{thm:typprop}
Let $\p$ be a probability distribution, and for $\delta>0$ let $T_{\delta}^N(\p)$ denote the ($N,\delta$)-typical set. Then
\begin{enumerate}
\item \label{en:typ1} Let $\eta>0$. Then there exists $N_0$ such that for any $N\geq N_0$
\[
P({\bf i}\in T_{\delta}^N(\p))\geq 1-\eta;
\]
\item  \label{en:typ2} Let $\eta>0$. Then there exists $N_0$ such that, for any $N\geq N_0$, the cardinality of $T_{\delta}^N(\p)$ is bounded as follows
\[
(1-\eta)2^{N[H(\p)-\delta]}\leq|T^N_{\delta}(\p)|\leq 2^{N[H(\p)+\delta]}.
\] 
\end{enumerate}
\end{theorem}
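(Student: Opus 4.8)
The plan is to recognize Theorem~\ref{thm:typprop} as the Asymptotic Equipartition Property of classical information theory, whose sole essential probabilistic input is the weak law of large numbers. First I would introduce, for a string drawn i.i.d.\ from the source, the random variables $Y_k := -\log_2 p_{i_k}$. Since the outcome set $\sX$ is finite these are i.i.d.\ and bounded, with common expectation $\mathbb{E}[Y_k] = -\sum_{i\in\sX} p_i\log_2 p_i = H(\p)$, and their empirical average is exactly $\frac1N\sum_{k=1}^N Y_k = \frac1N\log_2\frac1{p_{\bf i}}$. Membership ${\bf i}\in T^N_\delta(\p)$ is then precisely the event that this empirical average lies within $\delta$ of its mean.

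For item~\ref{en:typ1} I would invoke the weak law of large numbers directly: for fixed $\delta>0$ and any $\eta>0$ there is an $N_0$ such that for $N\geq N_0$ the empirical average deviates from $H(\p)$ by more than $\delta$ with probability at most $\eta$, which is exactly $P({\bf i}\in T^N_\delta(\p))\geq 1-\eta$.

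For item~\ref{en:typ2} the key observation is that the defining inequality of the typical set exponentiates to the two-sided estimate $2^{-N[H(\p)+\delta]}\leq p_{\bf i}\leq 2^{-N[H(\p)-\delta]}$ for every ${\bf i}\in T^N_\delta(\p)$. Summing the lower estimate over the typical set and bounding its total probability by $1$ yields the upper bound $|T^N_\delta(\p)|\leq 2^{N[H(\p)+\delta]}$, valid for all $N$. For the lower bound I would feed in item~\ref{en:typ1}: summing the upper estimate over the typical set gives $1-\eta\leq P({\bf i}\in T^N_\delta(\p))\leq |T^N_\delta(\p)|\,2^{-N[H(\p)-\delta]}$, which rearranges to $|T^N_\delta(\p)|\geq(1-\eta)2^{N[H(\p)-\delta]}$.

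There is no genuine obstacle here; this is a textbook result that is entirely self-contained once the weak law of large numbers is granted. The only point meriting care is the bookkeeping of thresholds: the cardinality upper bound holds for every $N$, whereas the lower bound is asymptotic and inherits its $N_0$ from item~\ref{en:typ1}, so I would state a single $N_0$ for both items by taking the larger of the two thresholds.
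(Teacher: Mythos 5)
Your proposal is correct and coincides with the standard argument: the paper itself offers no proof of Theorem~\ref{thm:typprop}, deferring to the cited textbooks~\cite{nielsen_chuang_2010,wilde_2013}, and those references prove it exactly as you do, via the weak law of large numbers applied to $Y_k=-\log_2 p_{i_k}$ together with exponentiating the defining inequality of $T^N_\delta(\p)$ to obtain the cardinality bounds. Your remark on the thresholds is also accurate --- the upper cardinality bound holds for every $N$ while the lower bound inherits its $N_0$ from item~\ref{en:typ1} --- so there is nothing to amend.
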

This is a standard result, whose proof can be found in any of the following references \cite{nielsen_chuang_2010,wilde_2013}. With this tool in our hand, we can prove the following 

\begin{lemma} \label{lem:typlemma}
 Let $S(N)$ be any collection of strings $(\sti,\s)$, with $\sti\in\{1,\dots,D_{\rA}\}^N$  and $\s\in\{+,-\}^{N-1}$, such that $|S(N)|=2^{2NR-1}$ with $R<\frac{H(\p)+1}{2}$ fixed. Then, for any $\eta>0$ there exists $N_0$ such that 
for any $N\geq N_0$ one has
\[
\sum_{(\bi,\s)\in S(N)}p_{\bi}\frac{1}{2^{N-1}}<\eta
\]
\end{lemma}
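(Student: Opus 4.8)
The plan is to read $\sum_{(\bi,\s)\in S(N)} p_{\bi}\frac{1}{2^{N-1}}$ as the total weight that $S(N)$ receives under the probability distribution $q(\bi,\s):=p_{\bi}\,2^{-(N-1)}$ on pairs of strings; this is correctly normalized, since summing over the $2^{N-1}$ sign strings for each fixed $\bi$ returns $\sum_{\bi}p_{\bi}=1$. The strategy is then to partition $S(N)$ according to whether the index string $\bi$ lies in the typical set $T^N_\delta(\p)$, control the atypical part by property~\ref{en:typ1} of Theorem~\ref{thm:typprop}, and control the typical part by a crude counting argument that exploits the cardinality constraint $|S(N)|=2^{2NR-1}$ together with the hypothesis $R<\frac{H(\p)+1}{2}$.

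First I would fix $\eta>0$ and choose $\delta>0$ small enough that $2R-H(\p)-1+\delta<0$, which is possible precisely because $2R<H(\p)+1$. For the atypical contribution, every $(\bi,\s)\in S(N)$ with $\bi\notin T^N_\delta(\p)$ can be overcounted against all signs, giving the bound $\sum_{\bi\notin T^N_\delta(\p)}p_{\bi}=P(\bi\notin T^N_\delta(\p))$, which by property~\ref{en:typ1} falls below $\eta/2$ once $N$ exceeds some $N_1$. For the typical contribution, I would use the defining inequality of the typical set, $p_{\bi}\leq 2^{-N[H(\p)-\delta]}$, so that each typical pair obeys $q(\bi,\s)=p_{\bi}2^{-(N-1)}\leq 2^{-N[H(\p)+1-\delta]+1}$. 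Multiplying by the total number of available terms $|S(N)|=2^{2NR-1}$ then yields the bound $2^{N[2R-H(\p)-1+\delta]}$ on the typical contribution, which by the choice of $\delta$ has a strictly negative exponent and hence drops below $\eta/2$ for all $N$ beyond some $N_2$.

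Setting $N_0=\max\{N_1,N_2\}$ and summing the two estimates would establish the claim. The delicate point is the order of quantifiers together with the exponent bookkeeping: $\delta$ must be selected first, depending only on the fixed $R$, to force the exponent $2R-H(\p)-1+\delta$ to be negative, and only afterwards is $N$ sent to infinity. I do not anticipate any genuine obstacle beyond this careful accounting, since the atypicality estimate is automatically uniform over the choice of $S(N)$---it is governed entirely by the tail probability $P(\bi\notin T^N_\delta(\p))$, independent of which strings $S(N)$ happens to contain---and the underlying mechanism is simply that the distribution $q$ concentrates on roughly $2^{N(H(\p)+1)}$ strings, exponentially more numerous than the $2^{2NR-1}$ slots that $S(N)$ can fill.
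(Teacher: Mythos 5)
Your proposal is correct and follows essentially the same route as the paper's proof: both split $S(N)$ along the typical set $T^N_\delta(\p)$, bound the atypical contribution by the tail probability via item~\ref{en:typ1} of Theorem~\ref{thm:typprop}, and bound the typical contribution by multiplying the pointwise typicality estimate $p_{\bi}\leq 2^{-N[H(\p)-\delta]}$ by the cardinality $|S(N)|=2^{2NR-1}$, with the paper's specific choice $\delta=\Delta/2$, $\Delta=H(\p)+1-2R$, being one instance of your condition $2R-H(\p)-1+\delta<0$. Your exponent bookkeeping and quantifier ordering match the paper's argument exactly.
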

\begin{proof}
Let $S$ be the set of all the possible strings $\s$, $\Delta \coloneqq H(\p)+1-2R>0$ and define
\begin{align*}
S_1&\coloneqq(T_{\frac{\Delta}{2}}^N(\p)\times S)\cap S(N), \\
S_2&\coloneqq(T_{\frac{\Delta}{2}}^N(\p)^c\times S)\cap S(N),
\end{align*}
where $A^c$ denotes the complementary set of $A$. Then consider
\[
\sum_{(\bi,\s)\in S(N)}p_{\bi}\frac{1}{2^{N-1}}=\sum_{(\bi,\s)\in S_1}p_{\bi}\frac{1}{2^{N-1}} + \sum_{(\bi,\s)\in S_2}p_{\bi}\frac{1}{2^{N-1}}.
\]
The first term is bounded as follows thanks to the definition of typical set (see equation \eqref{eq:equip})
\[
\begin{aligned}
\sum_{(\bi,\s)\in S_1}p_{\bi}\frac{1}{2^{N-1}}&\leq 2^{-N[H(\p)-\frac{\Delta}{2}]-N+1}|S(N)| \\
	&=2^{-N[H(\p)-\frac{\Delta}{2}]-N+1+2NR-1} \\
	&=2^{-N[H(\p)+1-2R-\frac{\Delta}{2}]} < \eta/2
\end{aligned}
\]
provided that we take $N\geq N_{1}$ with $N_1$ sufficiently large. For the term with $S_2$ we use item 1 of theorem \ref{thm:typprop}, which implies that for $\eta/2>0$ there exists $N_2$ such that for any $N\geq N_2$ we have $\sum_{\bi\in T_{\frac{\Delta}{2}}^N(\p)^c}p_\bi<\frac{\eta}{2}$. We then have
\begin{align*}
\sum_{(\bi,\s)\in S_2}p_{\bi}\frac{1}{2^{N-1}}\leq\sum_{\bi\in T_{\frac{\Delta}{2}}^N(\p)^c}p_\bi<\frac{\eta}{2}.
\end{align*}
Setting $N_0=\max\{N_1,N_2\}$ we have the thesis.
\end{proof}

Before giving the proof we provide an explicit formula to compute the figure of merit in equation \eqref{eq:BCTfom} in terms
of the probability distributions that characterise the encoding and decoding channels. Let us consider a BCT source represented by a
state $\rket{\rho}_\rA=\sum_{i\in\sX}p_i\rket{i}_\rA$ on a system $\rA$, where $\{p_i\}_{i\in\sX}$ is a probability distribution. Fix the number $N$ of copies of $\rA$ and consider a compression scheme $(\tE,\tD)$ for $M$ bibits, i.e. $\tE\in\tr_1(\rA^{\boxtimes N}\to\rB^{\boxtimes M})$ and $\tD\in\tr_1(\rB^{\boxtimes M}\to\rA^{\boxtimes N})$. By proposition \ref{prop:BCTchannels}, $\tE$ and $\tD$ are characterised by two probability distributions, $\{\lambda^{(\is)}_{\ts\tau}\}$ and $\{\mu^{(\ts)}_{\is\tau}\}$ respectively, where $\is$, $\ts$ is a shortened notation for the pairs $(\sti,{\bf s})\in\{1,\dots,D_\rA\}^N\times\{+,-\}^{N-1}$. and $({\bf t},{\bf n})\in\{1,2\}^{M}\times\{+,-\}^{M-1}$ respectively. Now, recall that the figure of merit is computed according to proposition \ref{prop:fig_merit}
\begin{multline}\label{eq:BCTfom}
\tilde{D}(\rho^{\boxtimes N},\tD\tE)= \\ 
\sum_{\is}\frac{1}{2^{N-1}}p_{\bf{i}}\normop{(\tD\tE-\tI_{\rA^N})\boxtimes\tI_{\rA^N})|(\is\is)_+)_{\rA^N\rA^N}}.
\end{multline}
For any $\is$ consider the norm
\begin{equation*}
\begin{aligned}
\normop{&(\tD\tE-\tI_{\rA^N})\boxtimes\tI_{\rA^N})|(\is\is)_+)_{\rA^N\rA^N}} \\
&=\normop{\sum_{\ts,{\bf i'_{s'}},\tau,\tau'} \lambda^{(\is)}_{\ts\tau}\mu^{(\ts)}_{{\bf i'_{s'}}\tau'}\rket{({\bf i'_{s'}}\is)_{\tau\tau'}}-\rket{(\is\is)_{+}}} \\
&= \normop{\sum_{\ts,{\bf i'_{s'}}\neq\is,\tau,\tau'}\lambda^{(\is)}_{\ts\tau}\mu^{(\ts)}_{{\bf i'_{s'}}\tau'}\rket{({\bf i'_{s'}}\is)_{\tau\tau'}}
	\\ &\quad+\sum_{\ts,\tau\neq\tau'}\lambda^{(\is)}_{\ts\tau}\mu^{(\ts)}_{\is\tau'}\rket{(\is\is)_-}
	\\ &\quad \left(-1+ \sum_{\ts,\tau}\lambda^{(\is)}_{\ts\tau}\mu^{(\ts)}_{{\is},\tau} \right) \rket{(\is\is)_{+}}}
\end{aligned}
\end{equation*}
where we have decomposed the sum into three terms.
%in the last step we also used the normalisation conditions for $\{\lambda^{(\is)}_{\ts\tau}\}$ and $\{\mu^{(\ts)}_{\is\tau}\}$. 
For any element $\rho$ in 
$\st_\R(\rA)$, say $\rho=\sum_{i}c_i\rket{i}_\rA$, we have $\normop{\rho}=\sum_i|c_i|$. Thereofre, using the normalisation conditions
for $\{\lambda^{(\is)}_{\ts\tau}\}$ and $\{\mu^{(\ts)}_{\is\tau}\}$
\[
\begin{aligned}
\normop{&(\tD\tE-\tI_{\rA^N})\boxtimes\tI_{\rA^N})|(\is\is)_+)_{\rA^N\rA^N}} \\
&= 2\left(1 - \sum_{\ts,\tau}\lambda^{(\is)}_{\ts\tau}\mu^{(\ts)}_{{\is},\tau}\right)
\end{aligned}
\]
and the figure of merit then becomes 
\begin{equation}\label{eq:BCTfomMod}
\tilde{D}(\rho^{\boxtimes N},\tD\tE)=2\left(1-\sum_{\is}\frac{1}{2^{N-1}}p_{\sti}\sum_{\ts,\tau}\lambda_{\ts\tau}^{(\is)}\mu^{(\ts)}_{\is\tau}\right).
\end{equation}

\begin{proof}[\bf Proof of theorem \ref{thm:main_thm}]
We first prove the achievability, namely that $I(\rho)\leq\frac{H(\{p_i\})+1}{2}$. Let $\delta>0$, and for any $N$ consider the following number of bibits

\[
M=\left\lceil N\left[\frac{H(\p)+1}{2}+\delta\right] \right\rceil
\]
by item \ref{en:typ2} of theorem \ref{thm:typprop} and the above choice of $M$ we have that, $\forall N \in \N$ 
\[
|T_{\delta}^N(\p)|2^{N-1}\leq 2^{N[H(\p)+1+\delta]-1}\leq 2^{2M-1} = D_{\rB^{\boxtimes M}}.
\] 
This bound entails the existence of a subset $\cP_N\subseteq\purst(\rB^{\boxtimes M})$ with cardinality equal to $|T_{\delta}^N(\p)|2^{N-1}$. Denote by $S_{\cP_N}$ the set of strings associated with $\cP_N$. We will use the same notation that we introduced in the proof of lemma \ref{lem:BCTdigitizability} and below the proof of lemma \ref{lem:typlemma}: any element in $S_{\cP_N}$ is specified by $M$ signs ${\bf t}=t_1,\dots,t_M$, one for each $\rB$ in the composition, and $M-1$ additional signs ${\bf n}=n_1,\dots,n_{M-1}$ emerging from the violation of purity of parallel composition. We then denote any element of $S_{\cP_N}$ with the multi-index $\bf t_n$. We now define the following $(\tE,\tD)$ compression scheme for any $N$:
\begin{enumerate}
\item The encoding $\tE$ is defined by a set of probability distributions $\lambda^{(\is)}_{\ts\tau}$, one for each multi-index $\is$, where ${\bf i}\in \{1,\dots,D_{\rA}\}^N$ and ${\bf s}\in\{+,-\}^{N-1}$ (recall the notation in the proof of lemma \ref{lem:BCTdigitizability}). Let $h:T^N_{\delta}(\p)\times \{+,-\}^{N-1}\rightarrow S_{\cP_N}$ be any injective function that associates each typical string $\is$ with a distinct element $\ts$ of $S_{\cP_N}$. Then, for any such $\is$ we set $\lambda_{h(\is)+}^{(\is)}=1$. In particular, for any $\sti\in T_{\delta}^N(\p)$ and any $\s \in \{+,-\}^{N-1}$ we have the following diagrammatic equation 

\begin{align*}
\begin{aligned}
\Qcircuit @C=1.2em @R=1em @! R { \multiprepareC{1}{(\is\is)_+} & \ustick {\rA^{\boxtimes N}}  \qw  & \gate {\tE} & \ustick {\rB^{\boxtimes M}}  \qw  &\qw \\
\pureghost {(\is\is)_+} & \qw  &\ustick  {\rA^{\boxtimes N}} \qw &\qw &\qw }
\end{aligned}
\ = \
\begin{aligned}
\Qcircuit @C=1.8em @R=1em @! R { \multiprepareC{1}{(h(\is)\is)_+} & \ustick {\rB^{\boxtimes M}} \qw & \qw \\
\pureghost {(h(\is)\is)_+} & \ustick {\rA^{\boxtimes N}} \qw & \qw}
\end{aligned}\ .
\end{align*}
If $\sti\not\in T^N_{\delta}(\p)$, for any $\s$ set $\lambda_{\bf \bar{t}_{\bar{s}}+}^{(\is)}=1$ for a fixed ${\bf \bar{t}_{\bar{s}}}\in S_{\cP_N}$.
Diagrammatically the action of $\tE$ is represented as follows
\begin{align*}
\begin{aligned}
\Qcircuit @C=1.2em @R=1em @! R { \multiprepareC{1}{(\is\is)_+} & \ustick {\rA^{\boxtimes N}}  \qw  & \gate {\tE} & \ustick {\rB^{\boxtimes M}}  \qw  &\qw \\
\pureghost {(\is\is)_+} & \qw  &\ustick  {\rA^{\boxtimes N}} \qw &\qw &\qw }
\end{aligned}
\ = \
\begin{aligned}
\Qcircuit @C=1.8em @R=1em @! R { \multiprepareC{1}{(\bf \bar{t}_{\bar{s}}\is)_+} & \ustick {\rB^{\boxtimes M}} \qw & \qw \\
\pureghost {(\bf \bar{t}_{\bar{s}}\is)_+} & \ustick {\rA^{\boxtimes N}} \qw & \qw}
\end{aligned}\ .
\end{align*}
Notice that, having defined $\lambda^{(\is)}_{\ts\tau}$ for any $\ts$ and $\tau$ and for any $\is$, we have fully specified the action of $\tE$ on all the pure states of $\rA^{\boxtimes N}\boxtimes \rA^{\boxtimes N}$ even if not all of them directly intervene in the evaluation of the figure of merit. 
\item Let $\{\mu_{\is\tau}^{(\ts)}\}$ be the probability distributions defining the decoding $\tD$. For any  $\ts\in S_{\cP_N}$ we simply set $\mu_{h^{-1}(\ts)+}^{(\ts)}=1$, namely, we invert the action of the encoding. Indeed, for any typical string $\is=h^{-1}(\ts)$ we have

\begin{align*}
\begin{aligned}
\Qcircuit @C=1.2em @R=1em @! R { \multiprepareC{1}{(\ts\is)_+} & \ustick {\rB^{\boxtimes M}}  \qw  & \gate {\tD} & \ustick {\rA^{\boxtimes N}}  \qw  &\qw \\
\pureghost {(\ts\is)_+} & \qw  &\ustick  {\rA^{\boxtimes N}} \qw &\qw &\qw }
\end{aligned}
\ = \
\begin{aligned}
\Qcircuit @C=1.8em @R=1em @! R { \multiprepareC{1}{(h^{-1}(\ts)\is)_+} & \ustick {\rA^{\boxtimes N}} \qw & \qw \\
\pureghost {(h^{-1}(\ts)\is)_+} & \ustick {\rA^{\boxtimes N}} \qw & \qw}
\end{aligned}\ .
\end{align*}
If $\ts\not\in S_{\cP_N}$, take a fixed string $\bar{\sti}_{\bar{\s}}$ and define $\mu_{\bar{\sti}_{\bar{\s}}+}^{(\ts)}=1$. This implies, for any $\is$

\begin{align*}
\begin{aligned}
\Qcircuit @C=1.2em @R=1em @! R { \multiprepareC{1}{(\ts\is)_+} & \ustick {\rB^{\boxtimes M}}  \qw  & \gate {\tD} & \ustick {\rA^{\boxtimes N}}  \qw  &\qw \\
\pureghost {(\ts\is)_+} & \qw  &\ustick  {\rA^{\boxtimes N}} \qw &\qw &\qw }
\end{aligned}
\ = \
\begin{aligned}
\Qcircuit @C=1.8em @R=1em @! R { \multiprepareC{1}{(\bar{\sti}_{\bar{\s}}\is)_+} & \ustick {\rA^{\boxtimes N}} \qw & \qw \\
\pureghost {(\bar{\sti}_{\bar{\s}}\is)_+} & \ustick {\rA^{\boxtimes N}} \qw & \qw}
\end{aligned}\ .
\end{align*}
\end{enumerate}
Now, with this scheme and using item \ref{en:typ1} of theorem \ref{thm:typprop}, we have that for any $\eta>0$ there exists $N_0$ such that for any $N\geq N_0$ the following holds
\begin{align*}
&D(\rho^{\boxtimes N},\tC)= \\ 
&\sum_{\bf{i}\not\in T_{\delta}^{N}(\p),\s}\frac{1}{2^{N-1}}p_{\bf{i}}\normop{|(\bar{\sti}_{\bar{\s}}\is)_+)_{\rA^N\rA^N}-|\is\is)_+)_{A^NA^N}} \\
&\leq 2\sum_{\bf{i}\not\in T_{\delta}^{N}(\p),\s}\frac{1}{2^{N-1}}p_{\bf{i}}\leq2\sum_{\bf{i}\not\in T_{\delta}^{N}(\p)}p_{\bf{i}} \leq 2\eta
\end{align*}
which in turns implies that, given $\epsilon=2\eta$, for any $N\geq N_0$ holds true that $E_{N,M,\epsilon}(\rho)\neq \emptyset$. Therefore, for such values of $N$ we have $M_{N,\epsilon}(\rho)/N\leq M/N$ (recall definition \ref{eq:IC-def}), and this implies, by taking the $\limsup_{N\rightarrow \infty}$ and then the $\lim_{\epsilon\rightarrow0}$ 
\[
I(\rho)\leq \frac{H(\p)+1}{2}+\delta,
\]
and the thesis follows by the arbitrariness of $\delta$.

Now we prove the minimality of $\frac{H(\p)+1}{2}$, namely that $I(\rho)\geq \frac{H(\p)+1}{2}$. Fix an arbitrary real positive number $\delta>0$ and let $\overline{M}$ and $\overline{N}$ be such that
\begin{equation}\label{eq:MbarNbar}
\frac{H(\p)+1}{2}-\delta\leq\frac{\overline M}{\overline {N}}<\frac{H(\p)+1}{2}.
\end{equation}
We then show that there exists $N_0$ such that for any integer $k$ satisfying
$k\overline{N}>N_0$ it holds that $E_{k\overline{N},k\overline{M},\epsilon}(\rho)=\emptyset$ for any
$\epsilon\in(0,\overline{\epsilon}]$ for some $\overline{\epsilon}$.
Let $(\tE,\tD)$ be a compression scheme for messages of length $k\overline{N}$ with $k\overline{M}$ bibits and let $\lambda_{\ts\tau}^{(\is)}$, $\mu^{(\ts)}_{\is\tau}$ be the probability distributions defining the action of $\tE$ and $\tD$ respectively on pure states of $\rA^{\boxtimes k\overline{N}}\rA^{\boxtimes k\overline{N}}$. Recall the expression of the figure of merit given in equation \eqref{eq:BCTfomMod}:
\begin{equation}
D(\rho^{\boxtimes k\overline{N}},\tD\tE)=2\left(1-\sum_{\is}\frac{1}{2^{N-1}}p_{\sti}\sum_{\ts,\tau}\lambda_{\ts\tau}^{(\is)}\mu^{(\ts)}_{\is\tau}\right).
\end{equation}
Now consider the sum between brackets. Since all the terms $\lambda_{\ts\tau}^{(\is)}\mu^{(\ts)}_{\is\tau'}$ are non-negative, we can upper bound the sum over $\tau=\tau'$ by a sum over independent indices $\tau,\tau'$, obtaining
\begin{equation}\label{eq:stupidbound}
\sum_{\is}\frac{1}{2^{N-1}}p_{\sti}\sum_{\ts,\tau}\lambda_{\ts\tau}^{(\is)}\mu^{(\ts)}_{\is\tau} \leq
\sum_{\is}\frac{1}{2^{N-1}}p_{\sti}\sum_{\ts}\lambda_{\ts}^{(\is)}\mu^{(\ts)}_{\is}
\end{equation}
where $\lambda_{\ts}^{(\is)}\coloneqq\sum_{\tau}\lambda_{\ts\tau}^{(\is)}$ and similarly for $\mu^{(\ts)}_{\is}$. Now we check that the right hand side of the inequality
written above is bounded as follows (we set $p_{\is}=p_{\sti}\frac{1}{2^{N-1}}$)
\begin{equation}
\sum_{\is}p_{\is}\sum_{\ts}\lambda_{\ts}^{(\is)}\mu^{(\ts)}_{\is}\leq \sum_{\is\in \mathsf{I}}p_{\is},
\end{equation}
where $\mathsf{I}$ is a set with cardinality at most $2^{2k\overline{M}-1}$; indeed, by repeatedly using the inequality $\sum_ka_kb_k\leq\max_k\{b_k\}$ valid for $b_k\geq0$ and $\sum_ka_k=1$ we find the following bound
\begin{equation}\label{eq:stupidbound2}
\begin{aligned}
\sum_{\is}p_{\is}\sum_{\ts}&\lambda_{\ts}^{(\is)}\mu^{(\ts)}_{\is} \leq\sum_{\is}p_{\is}\mu_{\is}^{(\tnbar(\is))} \\
		& =\sum_{\ts}\sum_{\is:\tnbar(\is)=\ts}p_{\is}\mu_{\is}^{(\tnbar(\is))} \\
		& \leq \sum_{\ts}p_{\is(\ts)}
\end{aligned}
\end{equation}
where, for any $\is$, $\tnbar(\is)$ is a fixed multi-index that corresponds to the maximum value $\mu_{\is}^{(\tnbar(\is))}$. In the last sum we have only one $\is$ for any $\ts$, therefore, since the cardinality of the set of strings $\ts$ is $2^{2k\overline{M}-1}$, the sum is over at most $2^{2k\overline{M}-1}$ terms, as claimed. 

Now, notice that our choice of $M_k=k\overline{M}$ entails 
$2^{2M_k-1}= 2^{2\frac{\overline{M}}{\overline{N}}N_k-1}$.
Since $\frac{\overline{M}}{\overline{N}}<\frac{H(\p)+1}{2}$ (see equation \eqref{eq:MbarNbar}), lemma \ref{lem:typlemma} can be applied, so that given $\eta>0$ there exists $N_0$ such that for any $k$ satisfying $k\overline{N}>N_0$ we have $\sum_{\ts}p_{\is(\ts)}<\eta$. By \eqref{eq:stupidbound} and \eqref{eq:stupidbound2}, we have (recall again \eqref{eq:BCTfomMod})
\[
\begin{aligned}
D(\rho^{\boxtimes k\overline{N}},\tD\tE)&\geq 2\left(1 - \sum_{\is}\frac{1}{2^{N-1}}p_{\sti}\sum_{\ts}\lambda_{\ts}^{(\is)}\mu^{(\ts)}_{\is}\right) \\
	&\geq 2\left(1 - \sum_{\ts}p_{\is(\ts)}\right) >2(1- \eta).
\end{aligned}
\]
And this holds for any compression scheme $(\tE,\tD)$ for $k\overline{N}$ usages of the source with $k\overline{M}$ bibits. This implies that, for any $\epsilon\in(0,2(1-\eta)]$
\[
E_{k\overline{N},k\overline{M},\epsilon}=\emptyset, \qquad \forall k>\left\lceil\frac{N_0}{\overline{N}}\right\rceil.
\]
Therefore, for any such $k$, we have the following chain of inequalities
\begin{align*}
\frac{H(\p)+1}{2}-\delta\leq \frac{k\overline{M}}{k\overline{N}}& \leq\limsup_{k\rightarrow \infty}\frac{M_{k\overline{N},\epsilon}(\rho)}{k\overline{N}} \\
&\leq\limsup_{N\rightarrow \infty}\frac{M_{N,\epsilon}(\rho)}{N}.
\end{align*}
The first and the second inequality follow by equation \eqref{eq:MbarNbar} and the definition of $M_{k\overline{N},\varepsilon}(\rho)$ respectively, while the third inequality follows by the fact that the $\limsup$ of a real sequence is always greater than the $\limsup$ of any of its subsequence. Taking the $\lim_{\epsilon\rightarrow 0}$ we have $\frac{H(\p)+1}{2}-\delta\leq I(\rho)$ and the thesis follows by the arbitrariness of $\delta$.
\end{proof}

Before giving the proof of proposition \ref{prop:regentr_noatomicity} we recall the following lemma (lemma 1 in \cite{PhysRevA.101.042118}).
\begin{lemma}\label{lem:lemma1}
Consider a simplicial OPT satisfying $n$-local discriminability for some positive integer $n$. Then for all systems $\rA$, $\rB$ and non-null extremal states 
$\rket{k}_{\rA\rB}$ of the composite system $\rA\rB$, there exists a unique product of non-null extremal states $\rket{i_k}_{\rA}\boxtimes\rket{j_k}_\rB$ such that $\rket{k}_{\rA\rB}$ convexly refines $\rket{i_k}_{\rA}\boxtimes\rket{j_k}_\rB$.
\end{lemma}
Since we are focusing on strongly causal theories, pure states are all and only the non-null extremal states, as already mentioned at the end of subsection \ref{subsec:framework}. 
The above lemma tells us that for any $\rket{k}_{\rA\rB}\in\purst(\rA\rB)$ there exist $\rket{i_k}_{\rA}\in\purst(\rA)$, $\rket{j_k}_\rB\in\purst(\rB)$ and  a probability distribution $\{p_\ell\}_{\ell\in\sL}$  such that 
\begin{equation}
\rket{i_k}_\rA\boxtimes\rket{j_k}_\rB=\sum_{\ell\in\sL}p_\ell \rket{\ell}_{\rA\rB}
\end{equation}
and $\rket{k}_{\rA\rB}=\rket{\ell}_{\rA\rB}$ for some $\ell\in\sL$. 
It easy to see that for any pure state $\rket{k}_{\rA\rB}$ of a bipartite system $\rA\rB$, both
$\rbra{j_k}_\rB\rket{k}_{\rA\rB}$ and $\rbra{i_k}_\rA\rket{k}_{\rA\rB}$ are also pure.
Indeed, if we now apply the effect $\rbra{j_k}_\rB$ on $\rB$ to the equation above, by purity of $\rket{i_k}$ it follows $\rket{i_k}_\rA=\rbra{j_k}_\rB\rket{\ell}_{\rA\rB}$ for any $\ell\in\sL$, and in particular for $\rket{k}$, which is then pure. Similarly if one applies the effect $\rket{i_k}$ on $\rA$. It will be useful to keep in mind this fact in the proof of proposition \ref{prop:regentr_noatomicity}
 
\begin{proof}[\bf Proof of proposition \ref{prop:regentr_noatomicity}] 
By hypothesis, there exist systems $\rA$ and $\rB$ and $\rket{i}_{\rA}\in\purst(\rA)$, $\rket{j}_{\rB}\in\purst(\rB)$ such that 

\begin{align*}
\begin{aligned}
\Qcircuit @C=1em @R=.7em @! R { \multiprepareC{1}{\Sigma} & \ustick  \rA \qw &\qw \\
\pureghost {\Sigma}&  \ustick \rB  \qw   &  \qw }
\end{aligned}
\ \coloneqq \
\begin{aligned}
\Qcircuit @C=1em @R=.7em @! R { \prepareC{i} &\ustick  \rA \qw  &\qw \\
\prepareC {j} & \ustick \rB  \qw   &\qw }
\end{aligned}
\ =\sum_{k}p_k^{(ij)} \
\begin{aligned}
\Qcircuit @C=1em @R=.7em @! R { \multiprepareC{1}{k} & \ustick \rA \qw & \qw \\
\pureghost {k} & \ustick \rB \qw & \qw}
\end{aligned} 
\end{align*}
where $\{p_k^{(ij)}\}_{k\in I_{ij}}$ is a non-trivial probability distribution and $\{\rket{k}_{\rA\rB}\}_{k\in I_{ij}}$ is a set of pure states of $\rA\rB$. 
Now consider $\rket{\Sigma^{\boxtimes 2}}_{\rA\rB}$, that can be decomposed as follows

\begin{align*}
\begin{aligned}
\Qcircuit @C=1em @R=.7em @! R { \multiprepareC{1}{\Sigma} & \ustick  \rA \qw &\qw \\
							\pureghost {\Sigma}&  \ustick \rB  \qw   &  \qw \\
							 \multiprepareC{1}{\Sigma} & \ustick  \rA \qw &\qw \\
							\pureghost {\Sigma}&  \ustick \rB  \qw   &  \qw}
\end{aligned}
\ = \sum_{k,k'} p_k p_{k'} \
\begin{aligned}
\Qcircuit @C=1em @R=.7em @! R { \multiprepareC{1}{k} & \ustick  \rA \qw &\qw \\
							\pureghost {k}&  \ustick \rB  \qw   &  \qw \\
							 \multiprepareC{1}{k'} & \ustick  \rA \qw &\qw \\
							\pureghost {k'}&  \ustick \rB  \qw   &  \qw}.
\end{aligned}
\end{align*}
If $\rket{k}_{\rA\rB}\boxtimes\rket{k'}_{\rA\rB}$ is pure, then one has a contradiction in that
\begin{align*}
\begin{aligned}
\Qcircuit @C=1em @R=.7em @! R { \prepareC{j} & \ustick  \rB \qw &\qw \\
							\prepareC{i}&  \ustick \rA  \qw   &  \qw}
\end{aligned}
\ = \
\begin{aligned}
\Qcircuit @C=1em @R=.7em @! R { \multiprepareC{1}{k} & \ustick  \rA \qw &\measureD{i} \\
							\pureghost {k}&  \ustick \rB  \qw   &  \qw \\
							 \multiprepareC{1}{k'} & \ustick  \rA \qw &\qw \\
							\pureghost {k'}&  \ustick \rB  \qw   &  \measureD{j}}
\end{aligned}.
\end{align*}
and, while the left hand side is mixed by hypothesis, the right hand side should be pure, since the whole state is assumed to be pure (see the discussion below lemma \ref{lem:lemma1}). We then conclude that $\rket{k}_{\rA\rB}\boxtimes\rket{k'}_{\rA\rB}$ must be necessarily mixed. Therefore, $\Sigma^{\boxtimes 2}$ has the following decomposition in terms of pure states
\begin{align*}
\begin{aligned}
\Qcircuit @C=1em @R=.7em @! R { \multiprepareC{1}{\Sigma} & \ustick  \rA \qw &\qw \\
							\pureghost {\Sigma}&  \ustick \rB  \qw   &  \qw \\
							 \multiprepareC{1}{\Sigma} & \ustick  \rA \qw &\qw \\
							\pureghost {\Sigma}&  \ustick \rB  \qw   &  \qw}
\end{aligned}
\ = \sum_{k,k'} p_k p_{k'} q_{\ell|k,k'} \
\begin{aligned}
\Qcircuit @C=1em @R=.7em @! R { \multiprepareC{3}{\ell} & \ustick  \rA \qw &\qw \\
							\pureghost {\ell}&  \ustick \rB  \qw   &  \qw \\
							 \pureghost{\ell} & \ustick  \rA \qw &\qw \\
							\pureghost {\ell}&  \ustick \rB  \qw   &  \qw}.
\end{aligned}
\end{align*}
where at least one of the conditioned probability distributions $\{q_{\ell|k,k'}\}_{\ell}$ is non-trivial. Now, notice that for any system $\rA$ and any state
$\rket{\rho}_{\rA}=\sum_ip_i\rket{i}_{\rA}$ one has $S_i(\rho)=H(\{p\})$. The argument that we give here is the same as the one proposed in \cite{KIMURA2010175} for classical theory (see in particular proposition 13 and theorem 3(i)), but since states are separating for effects in any OPT it also applies to the case of \emph{any} classical theory defined according to definition \ref{def:classicaltheories}. The case of $S_3$ is pretty obvious, given the uniqueness of the decomposition in terms of pure states. Now, since states are separating for effects, any other atomic effect is proportional to an effect of the perfectly discriminating test, i.e., $\rbra{a}_{\rA}=\lambda\rbra{i}_{\rA}$ for some $i$. Thus, for any state $\rho$ of a classical theory, it follows that any other atomic observation test $\{a_i\}$ is such that $H(a_i(\rho))\geq H(\p)$, by concavity of the function $x\log x$, and the equality is achieved for the perfectly discriminating test, whence $S_1(\rho)=H(\p)$. Finally, notice that for any state $\rho$ of a classical theory one has 
\[
S_2(\rho)=H(\p)-\inf_{\{a_i\}\in\cO^{\rm at}}H(X|J)
\]
and $H(X|J)=0$ for the perfectly discriminating test.
Now, a trivial computation shows that
\[
\begin{aligned}
S_i(\Sigma^{\boxtimes 2})= \ &-\sum_{k,k',\ell}p_kp_{k'}q_{\ell|k,k'}\log_2(p_kp_{k'}q_{\ell|k,k'})= \\
	= \ & 2S_i(\Sigma) -\sum_{k,k'\ell}p_kp_{k'}q_{\ell|k,k'}\log_2(q_{\ell|k,k'}) \\
	> \ & 2S_i(\Sigma)
\end{aligned}
\]  
Finally, notice that the subsequence $S_i(\Sigma^{\boxtimes 2^k})/2^k$ is increasing, since
\[
\frac{S_i(\Sigma^{\boxtimes 2^k})}{2^k}\geq \frac{S_i(\Sigma^{\boxtimes 2^{k-1}})}{2^k} + \frac{S_i(\Sigma^{\boxtimes 2^{k-1}})}{2^k} = \frac{S_i(\Sigma^{\boxtimes 2^{k-1}})}{2^{k-1}}
\]
and the result follows since the $\limsup$ of the whole sequence is greater than the $\limsup$ of any of its subequence
\[
\limsup_{N\to\infty}\frac{S_i(\Sigma^{\boxtimes N})}{N}\geq\lim_{k\to\infty}\frac{S_i(\Sigma^{\boxtimes 2^k})}{2^k}\geq\frac{S_i(\Sigma^{\boxtimes 2})}{2}> S_i(\Sigma)
\].
\end{proof}
\begin{proof}[\bf Proof of proposition \ref{prop:entropiesBCT}]
BCT is a classical theory, therefore we immediately have that $S_i(\rho)=H(\p)$ for any $i=1,2,3$ (see proposition \ref{prop:regentr_noatomicity}). For $S^{\rm reg}_i$, we just notice that $S_i(\rho^{\boxtimes N})$ is the Shannon of the factorized joint distribution $p_{\sti,\s}=p_{\bi}\frac{1}{2^{N-1}}$ where $p_{\sti}=p_{i_1}\dots p_{i_N}$, thus 
\[
\frac{S_i(\rho^{\boxtimes N})}{N}=\frac{NH(\p)+(N-1)H(\frac{1}{2})}{N}=H(\p)+1-\frac{1}{N}
\]
and the result follows by taking the limit.
\end{proof}
\end{document}